\renewcommand{\fnum@algorithm}{}
\begin{document}
\title{Basis Expansion Extrapolation based Long-Term Channel Prediction for Massive MIMO OTFS Systems }

\author{\normalsize {Yanfeng~Zhang,~\IEEEmembership{\normalsize{Member,~IEEE}},}
        Xu~Zhu,~\IEEEmembership{\normalsize{Senior Member,~IEEE}},
        Yujie~Liu,~\IEEEmembership{\normalsize{Member,~IEEE}},\\
        Yong Liang~Guan,~\IEEEmembership{\normalsize{Senior Member,~IEEE}},
        David González~G.,~\IEEEmembership{\normalsize{Senior Member,~IEEE,}\vspace{-0pt}}
        and Vincent~K.~N.~Lau,~\IEEEmembership{\normalsize{Fellow,~IEEE}\vspace{-10pt}}
\vspace{-15pt}
\thanks{An earlier version of this paper was presented in part at IEEE ICC \mbox{2023 \cite{ZhangICC2023}.} (Corresponding author: Xu Zhu.)}
\thanks{Y. Zhang is with the School of Electrical Engineering and Intelligentization, Dongguan University of Technology, Dongguan, China. (e-mail: yfzhang@ieee.org)}
\thanks{X. Zhu is with the School of Electronic and Information Engineering, Harbin Institute of Technology, Shenzhen, China. (e-mail:xuzhu@ieee.org)}
\thanks{Y. Zhang, Y. Liu and Y. Guan are with Continental-NTU Corporate Lab, Nanyang Technological University, Singapore. (e-mail: yujie.liu@ieee.org, eylguan@ntu.edu.sg.)}
\thanks{D. González G. is with the Wireless Communications Technologies Group, Continental AG, Germany. (e-mail: david.gonzalez.g@ieee.org.)}
\thanks{Vincent K. N. Lau is with the Department of Electronic and Computer Engineering, The Hong Kong University of Science and Technology, Hong Kong, China. (e-mail:eeknlau@ust.hk.)}
\vspace{-10pt}}

\maketitle

\begin{abstract}
Massive multi-input multi-output (MIMO) combined with orthogonal time frequency space (OTFS) modulation has emerged as a promising technique for high-mobility scenarios. However, its performance could be severely degraded due to channel aging caused by user mobility and high processing latency. 
In this paper, an integrated scheme of uplink (UL) channel estimation and downlink (DL) channel prediction is proposed to alleviate channel aging in time division duplex (TDD) massive MIMO-OTFS systems.
Specifically, first, an iterative basis expansion model (BEM) based UL channel estimation scheme is proposed to accurately estimate UL channels with the aid of carefully designed OTFS frame pattern. Then a set of Slepian sequences are used to model the estimated UL channels, and the dynamic Slepian coefficients are fitted by a set of orthogonal polynomials. A channel predictor is derived to predict DL channels by iteratively extrapolating the Slepian coefficients. Simulation results verify that the proposed UL channel estimation and DL channel prediction schemes outperform the existing schemes in terms of normalized mean square error of channel estimation/prediction and DL spectral efficiency, with less pilot overhead. 
\end{abstract}

\begin{IEEEkeywords}
massive MIMO, orthogonal time frequency space, channel estimation, channel prediction, basis expansion model. 
\end{IEEEkeywords}

\vspace{-12pt}
\section{Introduction}
Orthogonal time frequency space (OTFS) has been proposed as a promising modulation technique for the sixth-generation wireless communications, due to its remarkable performance in high-mobility communications \cite{Wei2021}. OTFS with massive multi-input multi-output (MIMO) is able to further increase the spectral efficiency (SE) and system throughput \cite{Rama2018}. In addition, time division duplex (TDD) has been considered as a popular duplexing scheme thanks to its high SE, flexible adjustment to uplink (UL) and \mbox{downlink (DL)} traffics, channel reciprocity, etc \cite{Kim2020,Caojie2023,Dongzhihao2023}. 
Achieving high performance gains in the TDD massive MIMO-OTFS system is contingent upon accurately acquiring UL and DL channel state information (CSI).
However, channel aging arises in high-mobility communications, where both UL and DL channels change fast over time \cite{ZhangWCL,AIEXTA2023,Zheng2021}. Channel reciprocity is no longer suitable for high-mobility TDD massive MIMO-OTFS systems, which makes their channel estimation and prediction more challenging. Therefore, investigating integrated solutions of UL channel estimation and DL channel prediction for massive MIMO-OTFS systems becomes imperative.
 
 \vspace{-10pt}
\subsection{Related Work}
In static or low-speed scenarios, the wireless channel is assumed to be only affected by frequency selectivity fading, and the channel impulse response (CIR) remains unchanged within the symbol duration. However, in high-mobility scenarios, the channel additionally experiences time selective fading caused by high Doppler spread, thereby the resultant time-frequency (TF) doubly selective channel makes the CIR to be time-varying within the symbol duration \cite{Wei2021}. In order to obtain real-time CSI in high-mobility scenarios, a long training sequence is usually required for channel estimation to track the large number of channel parameters \cite{ZhangWCL}, which not only leads to high computational complexity, but also results in spectral inefficiency. Hence, new modulation schemes that are robust to channel time-variations are being extensively explored. To cope with this problem, OTFS modulation technology was proposed \cite{Hadani2017} and attracted much attention due to significant advantages in time-variant channels.

Channel estimation is one of the important research problems in OTFS to achieve high reliability communications in high-mobility scenarios. In recent years, a number of pilot-aided OTFS channel estimation algorithms have been proposed \cite{Raviteja2019, Shi2021,Sup2021,Sup2022,Shen2019, ULDL2020, Bayes2022,Message2021,Qu2021, Liu2022,Zhang2023cross}. In \cite{Raviteja2019} and \cite{Shi2021}, embedded pilot-aided channel estimation methods have been proposed, where a large number of guard pilots are deployed in the delay-Doppler (DD) domain to avoid strong inter-Doppler interference (IDI). Superimposed pilot-aided channel estimation has been reported in \cite{Sup2021} and \cite{Sup2022} to improve SE, where the interference between data and pilot is eliminated iteratively to ensure accurate channel estimation. Additionally, compressed sensing (CS) based channel estimation algorithms have been developed in \cite{Shen2019, ULDL2020,Message2021,Bayes2022} by exploiting the sparsity of channels in DD domain. However, only integer or fractional Doppler shifted channels are considered in these works. Recently, more practical channels with continuous Doppler spread have been studied in \cite{Qu2021,Liu2022,Zhang2023cross}, and basis expansion model (BEM) based channel estimation algorithms have been proposed to track time-varying channel. However, in these works, obtaining both UL and DL OTFS channels requires transmitting pilots in each OTFS frame, leading to a significant reduction in SE. DL channel prediction is an effective tool to reduce pilot overhead, whereby previous/historical UL channel estimation are used to predict DL channels without the need for pilot transmission. However, to the best of the authors' knowledge, OTFS DL channel prediction has not been extensively investigated in the literature.

A variety of channel prediction techniques have been proposed for non-OTFS systems, which can be broadly classified into three categories: parametric model-based \mbox{methods \cite{Para2009, Para2015,Para2018},} linear model-based methods \cite{Wu2021,Kalman2000,Talaei2021,Prony2020,MatrixPencil2022,Peng2019}, and neural network-based methods \cite{Dong2019,Jiang2022,Chu2022,Mattu2022}. Parametric model-based methods predict the physical parameters of wireless channels (\emph{e.g.}, complex attenuation, path delay, Doppler frequency, etc.) rather than directly predicting the CIRs \cite{Para2015}. Their prediction performance depends heavily on the estimation accuracy of physical parameters. In practical communication scenarios, these parameters are difficult to be accurately estimated due to dynamical channel environment \cite{Para2018}. 
Linear model-based prediction method treats the channel as a linear combination of several past channel coefficients and selects an appropriate optimization criterion to derive a channel predictor. Commonly used linear predictors include autoregressive (AR) \cite{Wu2021}, Kalman filter \cite{Kalman2000}, subspace \mbox{extraplotation \cite{Talaei2021},}  Prony's method \cite{Prony2020}, matrix pencil method \cite{MatrixPencil2022},  first-order Taylor model \cite{Peng2019}, etc. These methods perform well in slowly varying channels. However, this assumption may not be applicable in high-mobility scenarios.
Neural network-based channel prediction methods aim to forecast future channels by treating the prediction problem as a time-series prediction problem. A nonlinear prediction model can be learned from a large number of historical channel samples by using technologies such as deep convolutional neural network (CNN) \cite{Dong2019}, transformer \cite{Jiang2022}, reinforcement learning \cite{Chu2022} and long short-term memory (LSTM) network \cite{Mattu2022}.
These methods have demonstrated better prediction performance than parametric and linear methods in non-stationary environments. However, they are computationally expensive due to large-scale data training, and their hyperparameters are often set through trial and error, making real-time channel prediction and model transfer difficult. Moreover, the high-order nonlinear nature of neural networks makes it difficult to perform asymptotic analysis of channel prediction \mbox{performance.}

It is noteworthy that existing studies primarily focuses on either channel estimation or channel prediction, with little attention given to the integration of UL channel estimation and DL channel prediction in high-mobility scenarios.
In addition, a large number of guard pilots are required to avoid IDI and inter-user interference (IUI) in existing channel estimation schemes \cite{Shen2019, ULDL2020, Bayes2022,Message2021,Qu2021, Liu2022,Zhang2023cross}, resulting in low SE. Most importantly, due to high mobility, it is unreasonable to assume block fading channel in OTFS systems. As a result, existing channel prediction methods may not perform well in predicting a large number of future channel samples, such as those in multiple continuous OTFS frames. This motivates us to develop a unified scheme that integrates low-pilot-overhead UL channel estimation and long-term DL channel prediction for massive MIMO-OTFS systems.

 \vspace{-10pt}
\subsection{Contributions}
In this paper, an integrated scheme of UL channel estimation and DL channel prediction is proposed for massive MIMO-OTFS systems.  At the UL stage, an iterative BEM based UL channel estimation scheme is proposed. A variable block length-simultaneous orthogonal matching pursuit (VBL-SOMP) algorithm is then proposed to estimate the BEM coefficients by taking advantage of the individual and common block sparsity of channels. Then the Savitzky-Golay (SG) smoothing technique is designed to refine the channel estimation. Finally, a data-aided version is extended to further enhance UL channel estimation. At the DL stage, a set of Slepian sequences are used to represent the estimated UL channels, and a smoothed basis expansion \mbox{extrapolation (SBEE)} based channel prediction algorithm is developed to predict the DL Slepian coefficients instead of DL CIRs. The predicted DL Slepian coefficients are then used to recover the DL channels. The main contributions of this paper are summarized as follows.

\begin{itemize}
\item To the best of the authors' knowledge, this is the first solution that takes into account both UL channel estimation and DL channel prediction for TDD massive MIMO-OTFS systems. Unlike the existing schemes \cite{Raviteja2019, Shi2021,Sup2021,Sup2022,Shen2019, ULDL2020, Bayes2022,Message2021,Qu2021, Liu2022,Zhang2023cross} requiring pilot-aided estimation for both UL and DL channels, the proposed scheme needs to estimate UL channel only and DL channel can be predicted from UL channel estimates, enabling the reduction of pilot overhead and processing latency.
Furthuer more, since the proposed scheme only requires a small number of estimated UL channel samples, it is more suitable for practical communication systems. \mbox{In contrast,} the schemes \mbox{in \cite{Para2009,Para2015,Para2018,Wu2021,Kalman2000,Talaei2021,Prony2020,
MatrixPencil2022,Peng2019,Dong2019,Jiang2022,Chu2022,Mattu2022}} require a large number of perfect historical UL channel samples for DL \mbox{channel prediction.}

\item The proposed iterative BEM channel estimator enables accurate UL channel estimation with very low pilot overhead and fast convergence speed. Rather than using non-overlapping or orthogonal pilots in \cite{ULDL2020,Bayes2022,Message2021}, the pilots of proposed OTFS frame design can be overlapping for different users, thus greatly reducing pilot overhead. In addition, thanks to the exploitation of both individual and common block channel sparsity in BEM domain, the proposed VBL-SOMP algorithm outperforms the existing methods, \emph{e.g.}, \cite{Sup2021}, \cite{Shen2019} and \cite{Liu2022}, in terms of channel estimation.
Further more, the proposed SG smoothing method can reduce the channel modeling error introduced by BEM, thus speeding up the convergence of iterative channel estimation.

\item The proposed SBEE channel predictor can enhance both short-term and long-term channel prediction by exploiting the global temporal correlation of UL channel estimates and subsequent DL channel predicts. Besides, iterative extrapolation is performed for a small number of Slepian coefficients instead of predicting a large number of channel coefficients in \cite{Para2009,Para2015,Para2018,Wu2021,Kalman2000,Talaei2021,Prony2020,
MatrixPencil2022,Peng2019,Dong2019,Jiang2022,Chu2022,Mattu2022}. Hence, high-dimensional matrix operations and intensive offline training can be avoided, reducing computational complexity and training cost.
The asymptotic performance of the proposed SBEE channel predictor is also provided. Simulation results verify that the DL channel prediction performance of SBEE predictor outperforms the existing schemes \cite{Wu2021}, \cite{Prony2020} and \cite{Mattu2022} in terms of normalized mean square error (NMSE) of channel prediction and DL SE. 
\end{itemize}

 \vspace{-10pt}
\subsection{Organization and Notations}
\emph{Organization of the Paper:}
The system model and channel model for massive MIMO-OTFS systems are presented in Section II. In Section III, the iterative BEM based UL sparse channel estimation is described in detail. The proposed DL SBEE channel prediction scheme with UL channel estimation is highlighted in Section IV. Complexity and performance analysis of the proposed UL iterative BEM channel estimator and DL SBEE channel predictor are given in Section V. Extensive simulation results are shown in Section VI, and conclusions are drawn in Section VII.

\emph{Notations:} Bold symbols represent vectors or matrices. $ \otimes $ and $\odot $ denote the Kronecker product and Hadamard product, respectively. ${( \cdot )^{\rm T}}$, ${(\cdot)}^{*}$, ${( \cdot )^{\rm H}}$ and ${( \cdot )^{ - 1}}$ denote the transpose, complex conjugate, conjugate transpose and matrix inversion, respectively. $|| \cdot |{|_p}$ and $|| \cdot |{|_{\rm F}}$ denote ${\ell _p}$ norm operation and Frobenius norm operation, respectively. ${\jmath}=\sqrt{-1}$.

\vspace{-5pt}
\section{System Model}
In this paper, a massive MIMO-OTFS system is considered, where a single base station (BS) \footnote{It is noteworthy that the proposed integrated UL channel estimation and DL channel prediction scheme can be extended to multi-cell scenarios, which is left for future work. The inter-cell interference can be avoided by using existing techniques such as inter-BS interference coordination \cite{Hamza2013} and user scheduling scheme \cite{Xie2017}.} equipped with $N_{\rm r}$ antennas serves $N_{\rm u}$ users, each with a single antenna, as illustrated in Fig. 1. 
Considering the TDD mode, different time resources are allocated for UL and DL transmissions. As show in Fig, 2, at UL stage, the users send OTFS modulated signal that contain both data and pilots to BS. The BS extracts pilots for channel estimation, and then predicts DL channels based on the estimated UL channels. At DL stage, the BS performs precoding based on the predicted DL channels and transmits DL data to the users.

\vspace{-10pt}
\subsection{Signal Model of Massive MIMO-OTFS Systems}
Let ${\bf{x}}_{{\rm{DD}}}^{({n_{\rm u}})} = {[X_{{\rm{DD}}}^{({n_{\rm u}})}[0,0], \cdots ,X_{{\rm{DD}}}^{({n_{\rm u}})}[M - 1,N - 1]]^{\rm H}}\in {{\mathbb{C}}^{MN}}$ denote the DD-domain signal sent by the $n_{\rm u}$-th user, where $X_{\text{DD}}^{({{n}_{\text{u}}})}[m,n]$ is the $(m, n)$-th symbol in the DD-domain, $M$ and $N$ are the number of  delay bins and Doppler bins, respectively. Both transmit and receive pulse
shaping are assumed to be rectangular \cite{Rama2018}. The received DD-domain signal can be \mbox{written as}
\begin{equation}
{{\bf{y}}_{{\rm{DD}}}} = \sum\nolimits_{{n_{\rm u}} = 1}^{{N_{\rm u}}} {{\bf{H}}_{{\rm{eff}}}^{({n_{\rm u}})}{\bf{x}}_{{\rm{DD}}}^{({n_{\rm u}})} + {{\bf{w}}_{{\rm{DD}}}}} ,
\label{eq1}
\end{equation}
where ${{\bf{y}}_{{\rm{DD}}}} = {[{({\bf{y}}_{{\rm{DD}}}^{(1)})^{\rm H}}, \cdots ,{({\bf{y}}_{{\rm{DD}}}^{({N_{\rm r}})})^{\rm H}}]^{\rm H}}\in {{\mathbb{C}}^{MN{{N}_{\text{r}}}}}$, ${\bf{y}}_{{\rm{DD}}}^{({n_{\rm r}})} = {[y_{{\rm{DD}}}^{({n_{\rm r}})}[0], \cdots ,y_{{\rm{DD}}}^{({n_{\rm r}})}[MN - 1]]^{\rm H}}$ is the received signal at the $n_{\rm r}$-th antenna, ${\bf{H}}_{{\rm{eff}}}^{{(n_{\rm u})}} = {[{({\bf{H}}_{{\rm{eff}}}^{(1, n_{\rm u})})^{\rm H}}, \cdots ,{({\bf{H}}_{{\rm{eff}}}^{({N_{\rm r}, n_{\rm u}})})^{\rm H}}]^{\rm H}}\in {{\mathbb{C}}^{{{N}_{\text{r}}}MN\times MN}}$, $\mathbf{H}_{\text{eff}}^{({{n}_{\text{r}}},{{n}_{\text{u}}})}\in {{\mathbb{C}}^{MN\times MN}}$ is the equivalent channel matrix between the $n_{\rm u}$-th user and the $n_{\rm r}$-th BS antenna, and ${{\bf{w}}_{{\rm{DD}}}} = {[{({\bf{w}}_{{\rm{DD}}}^{(1)})^{\rm H}}, \cdots ,{({\bf{w}}_{{\rm{DD}}}^{({N_{\rm r}})})^{\rm H}}]^{\rm H}}\in {{\mathbb{C}}^{MN{{N}_{\text{r}}}}}$. Specifically, ${{{\bf{w}}_{\rm{DD}}^{({n_{\rm r}})}}}$ is the additive white Gaussian noise (AWGN) and ${{\bf{w}}_{\rm DD}^{({n_{\rm r}})}} \sim {\cal C}{\cal N}(0,{\sigma ^2}{{\bf{I}}_{MN}})$, where ${{\mathbf{I}}_{MN}}\in {{\mathbb{R}}^{MN\times MN}}$ is an identity matrix, and $\sigma ^2$ is the average power of AWGN. ${\bf{H}}_{{\rm{eff}}}^{({n_{\rm r}},{n_{\rm u}})}$ can be expressed as \cite{Rama2018}
\begin{equation}
{\bf{H}}_{{\rm{eff}}}^{({n_{\rm r}},{n_{\rm u}})} = {{\bf{B}}_{\rm{r}}}{\bf{H}}_{\rm{T}}^{({n_{\rm r}},{n_{\rm u}})}{{\bf{B}}_{\rm{t}}},
\label{eq2}
\end{equation}

\begin{figure}[htbp]
\centerline{\includegraphics[width=0.5\textwidth]{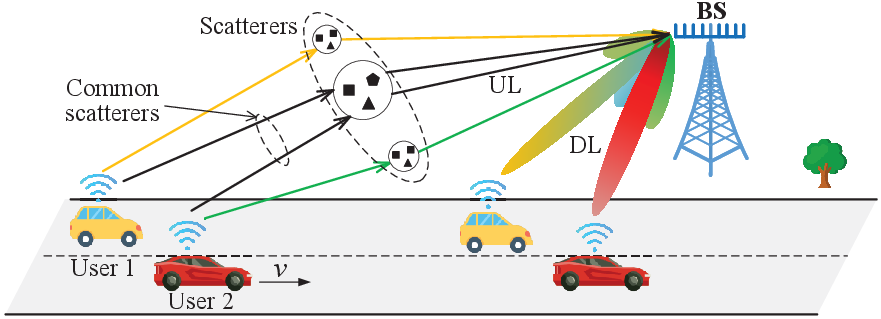}}
\caption{High-mobility massive MIMO-OTFS systems with two users.}
\label{fig1}
\vspace{-5pt}
\end{figure}
\begin{figure}[htbp]
\centerline{\includegraphics[width=0.48\textwidth]{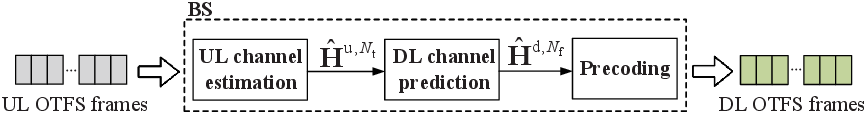}}
\caption{Block diagram of the integrated scheme of UL channel estimation and DL channel prediction.}
\vspace{-10pt}
\label{fig2}
\end{figure}

\noindent where ${\bf{H}}_{\rm{T}}^{({n_{\rm r}},{n_{\rm u}})}\in {{\mathbb{C}}^{MN\times MN}}$ is the time-domain channel matrix, ${{\bf{B}}_{\rm{r}}} = {{\bf{F}}_N} \otimes {{\bf{I}}_M}$, ${{\bf{B}}_{\rm{t}}} = {\bf{F}}_N^{\rm H} \otimes {{\bf{I}}_M}$, and ${{\bf{F}}_N} \in {\mathbb{C}^{N \times N}}$ is a normalized discrete Fourier transform (DFT) matrix.

\vspace{-10pt}
\subsection{Channel Model}
Considering a broadband channel with $L$ paths, the time-domain channel matrix corresponding to the $n_{\rm r}$-th antenna and $n_{\rm u}$-th user can be represented as
\begin{equation}
{\bf{H}}_{\rm{T}}^{({n_{\rm r}, n_{\rm u}})} = \sum\nolimits_{l = 0}^{L - 1} {{{\bf{\Pi }}^l}{\rm{diag}}\left( {{{\bf{h}}_{{n_{\rm r}, n_{\rm u}},l}}} \right)} ,
\label{eq3}
\end{equation}
where ${\bf{\Pi }} = {\rm{circ}}\{ [0,1,0, \cdots ]_{MN}^{\rm T}\} \in {{\mathbb{C}}^{MN\times MN}}$ is a permutation matrix, ${{\bf{h}}_{{n_{\rm r}},{n_{\rm u}},l}} = {[{h_{{n_{\rm r}},{n_{\rm u}},l,1}}, \cdots ,{h_{{n_{\rm r}},{n_{\rm u}},l,MN}}]^{\rm H}}$ is the CIR of  \mbox{$l$-th} path. 
The number of paths can be calculated as 
$L = \left\lfloor {{\tau _{\max }}/{T_{\rm s}}} \right\rfloor $, where $\tau_{\max}$ denotes the maximum delay and $T_{\rm s}$ is the sampling period. The Jakes' model is used to characterize the time-varying CIR ${{\bf{h}}_{{n_{\rm u}},{n_{\rm r}},l}}$, resulting in the correlation function of the channel of the $l$-th path being ${J_0}(2\pi n{f_{\max }}{T_{\rm s}})$, where ${J_0}( \cdot )$ represents the zero-order Bessel function of the first kind and ${f_{\max }} = \frac{{{f_{\rm c}}v}}{c}$ is the maximum Doppler frequency. 
The temporal correlation between the $k$-th channel gain at the $n_{\rm t}$-th frame and the $j$-th channel gain at the $n_{\rm f}$-th frame is given as $\mathbb{E}\left\{ {{h}_{({{n}_{\text{t}}}-1)MN+k}}{{[{{h}_{({{n}_{\text{f}}}-1)MN+j}}]}^{*}} \right\}={{J}_{0}}(2\pi {{f}_{\max }}{{T}_{s}}|({{n}_{\text{t}}}-{{n}_{\text{f}}})MN+k-j|)$, where $k, j =0,1,\cdots,MN-1$.
Parametric model is adopted to characterize the spatial correlation of channels in massive MIMO systems \cite{Para2018}, which is expressed as a function of channel gain and angle of arrival (AOA). Assuming that there are multiple scattering clusters between the users and BS, and each cluster contains multiple rays. Without loss of generality, the number of clusters is assumed to be equal to the number of paths, ${h_{{n_{\rm r}},{n_{\rm u}},l,n}}$ can then be written as
\begin{equation}
{h_{{n_{\rm r}},{n_{\rm u}},l,n}} = {\beta _{{n_{\rm u}},l,n}}{a_{{\rm{BS}}}}({\theta _{{n_{\rm r}},{n_{\rm u}},l}}),
\label{eq4}
\end{equation}
where ${\beta _{{n_{\rm u}},l,n}} = {\zeta _{{n_{\rm u}},l,n}}{e^{ - \jmath (2\pi {f_{\max }}n{T_{\rm s}}\cos {\psi _{{n_{\rm u}},l}} + {{\bar \psi }_{{n_{\rm u}},l}})}}$ with ${\zeta _{{n_{\rm u}},l,n}}$ being the $n_{\rm u}$-th user's time-varying complex amplitude of the $l$-th path at time instant $n$, ${{\psi _{n_{\rm u},l}}}$ denotes the horizontal angle between the $n_{\rm u}$-th user's motion direction and the BS, ${\bar{\psi} _{n_{\rm u},l}} \in [0,2\pi ]$ is the initial phase,  and ${a_{{\rm{BS}}}}({\theta _{{n_{\rm r}},{n_{\rm u}},l}})$ is the $n_{\rm r}$-th element of steering vector given by
\begin{equation}
{a_{{\rm{BS}}}}({\theta _{{n_{\rm r}},{n_{\rm u}},l}}) = {e^{j2\pi {n_{\rm r}}d\sin {\theta _{{n_{\rm u}},l}}/{{\lambda}_{w}} }},{n_{\rm r}} = 1, \cdots ,{N_{\rm r}},
\label{eq5}
\end{equation}
where ${\theta _{n_{\rm u},l}} $ is the AOA of the $l$-th path seen by the BS, $\lambda_{\rm w} $ is the wavelength of the transmitted signal, $d$ denotes the antenna spacing, which is typically set to $d = {{\lambda}_{\rm w}  \mathord{\left/
 {\vphantom {\lambda  2}} \right.
 \kern-\nulldelimiterspace} 2}$. In far-field communications, the incident signal seen by the BS is assumed to be limited to a narrow angular spread \cite{Para2018}, the AOA $\theta_{n_{\rm u},l}$ can be expressed as ${\theta _{n_{\rm u},l}} = {{\bar \theta }_{n_{\rm u},l}} + \Delta {\theta _{n_{\rm u},l}}$, where ${{\bar \theta }_{n_{\rm u},l}}$ and ${\Delta _{{\theta _{n_{\rm u},l}}}}$ denote the $n_{\rm u}$-th user's central angle and angular spread corresponding to the $l$-th cluster, respectively.
 
 In this paper, $N_{\rm t}$ consecutive UL OTFS subframes and $N_{\rm f}$ consecutive DL OTFS subframes can be regarded as one frame, and the variation of the channel gains within this frame is modeled by the Jakes’ Doppler spectrum \cite{Zhang2022}. According to \cite{Qinqibo2018}, AOA is assumed to remain unchanged within $N_{\rm f}$ consecutive OTFS frames if ${{N}_{\text{f}}}<\frac{\pi D}{2{{T}_{\text{f}}}v{{N}_{\text{r}}}}$, where $D$  denotes the distance between the scatterers and the mobile user, and $T_{\rm f}$ is the frame duration of OTFS systems. For example, provided with $D = 100$ meters, $T_{\rm f}=267$ us, $N_{\rm r} = 64$, and $v = 120$ km/h, AOA would not change with time as long as ${{N}_{\text{f}}}<276$. Therefore, AOA is able to maintain unchanged within tens of OTFS frames.
 
Considering outdoor macro-cell scenarios with rich scatterers \cite{Zhang2022}, the channel is assumed to be sparse in delay domain, and there are only $K$ ($K\ll L$) significant paths within the maximum delay, \emph{i.e.},
\begin{equation}
K = \sum\nolimits_{k = 0}^{L - 1} {{\rm{sgn}}(||{{\bf{h}}_{{n_{\rm r}},{n_{\rm u}},k}}|{|_2} - \zeta )},
\label{eq6}
\end{equation}
where ${\mathop{\rm sgn}} (x) = 1$ if $x\ge 0$, and ${\mathop{\rm sgn}} (x) = 0$ if $x< 0$. $\zeta $ is a small positive threshold, which determines whether the channel power of the $k$-th path can be ignored. It is noteworthy that the number of significant paths $K$ can be estimated by existing methods, such as minimum description length criterion \cite{MDL2009}. 
In practical communication systems, such as the internet of vehicles, physically adjacent users may share several common scatterers, as shown in \mbox{Fig. 1,} leading to a partial overlap of their channels in delay domain \cite{Rao2014}. This means that each user's channel consists of common paths and individual paths. In Section III, the common-path and individual-path characteristics of channels will be exploited to design low-overhead channel estimation algorithms for massive MIMO-OTFS systems.

As in \cite{Poutanen2010}, it is reasonable to assume that different users share a common path if there is a common scatterer between these users and the angular separation of the common scatterer seen from the users is less than 90 degrees. According to \cite{Zhang2022}, path delays are proved to vary more slowly than path gains. This is because the coherence time of path gains and the duration for path delay variation are inversely proportional to the carrier frequency and the signal bandwidth, while the carrier frequency is much larger than the signal bandwidth in a practical system. For example, given the mobile speed $v$ = 33.3 m/s (120 km/h), number of subcarriers $M=128$, subcarrier spacing $\Delta f = 30$ KHz, sampling period ${T_{\rm{s}}} = 1/M\Delta f = 2.6 \times {10^{ - 7}}$ seconds, and CP length $L_{\rm {CP}}= 128$ instants, the path delays can be assumed unchanged over $J \le \frac{{0.01c}}{{(M + {L_{{\rm{CP}}}})v}} = 351$ consecutive OFDM symbols \cite{Qinqibo2016}, corresponding to a duration of 0.0234 seconds, where $c$ is the speed of light. In addition, a duration of 0.0234 seconds is equivalent to 78 OTFS frames each containing $N = 8$ OFDM symbols. Therefore, it is assumed in this paper that the channel path delays remain unchanged for the duration of 78 consecutive OTFS symbols.


\vspace{-10pt}
\section{BEM Iterative UL Channel Estimation}
A BEM based iterative channel estimator is proposed in this section, as shown in Fig 3. A novel OTFS frame pattern with hybrid pilots is designed to reduce pilot overhead. A VBL-SOMP algorithm is then proposed to estimate the sparse UL channel by taking advantage of both the channel's common-block-sparse and individual-block-sparse properties. Additionally, a SG based channel smoothing approach is proposed to refine the estimated channel, reducing the channel modeling error caused by BEM. Furthermore, a data-aided iterative scheme is introduced to further improve the accuracy of UL channel estimation.

\begin{figure}
\centerline{\includegraphics[width=0.48\textwidth]{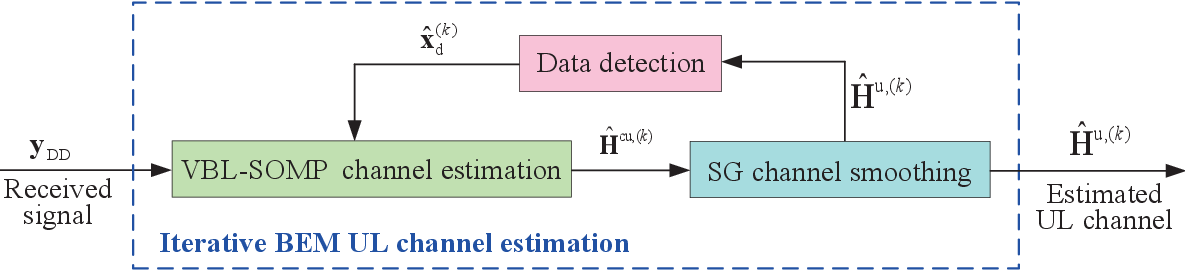}}
\caption{Block diagram of the iterative BEM UL channel estimation scheme.}
\vspace{-5pt}
\label{fig3}
\end{figure}

\begin{table}
\renewcommand{\arraystretch}{1.25}
\centering
\caption{Key Notations in Subsection III-A}
\begin{tabular}{|c|c|}
\hline
\textbf{Symbol} & \textbf{Definition} \\
\hline
$Q$ & Order of CE-BEM \\
\hline
$G$ & Number of non-zero pilots per user \\
\hline
${\bf{x}}_{\rm{p}}^{({n_{\rm u}})} \in \mathbb{C}^{MN}$ & TF-domain pilot of the $n_{\rm u}$-th user \\
\hline
${\bf{x}}_{{\rm{nz}}}^{({n_{\rm u}})}\in \mathbb{C}^{G}$ & Non-zero pilot of the $n_{\rm u}$-th user \\
\hline
${{\mathcal{P}}_{\text{nz}}}\in {{\mathbb{R}}^{G}}$ & Index set of non-zero pilots \\
\hline
${{\mathcal{P}}_{q}}\in {{\mathbb{R}}^{G}}(q\ne (Q-1)/2)$& Index set of guard pilots \\
\hline
${\bf{x}}_{{\rm{DD}},{\rm{P}}}^{({n_{\rm u}})}\in {\mathbb{C}^{MN \times MN}}$& DD-domain pilot of the $n_{\rm u}$-th user \\
\hline
${\lambda _{\rm{p}}} = \frac{{G(2Q - 1)}}{{MN}}$& Pilot overhead of the proposed scheme \\
\hline
\end{tabular}
\label{tab2}
\end{table}

\vspace{-10pt}
\subsection{OTFS Frame Pattern}
A novel OTFS frame with hybrid pilots is designed to enable sparse channel estimation for massive MIMO-OTFS systems. As shown in Fig. 4, for the $n_{\rm u}$-th user, the original TF-domain pilot signal ${\bf{x}}_{\rm{p}}^{({n_{\rm u}})} \in \mathbb{C}^{MN}$ is carefully designed with a number of non-zero pilots (in orange color), each of which is surrounded by $2Q-2$ guard pilots (in white color), where $Q$ is the CE-BEM order. The non-zero pilots are denoted as ${\bf{x}}_{{\rm{nz}}}^{({n_{\rm u}})}\in \mathbb{C}^{G}$, where $G$ is the number of non-zero pilots. The optimization of $G$ non-zero pilots is given in Subsection III-C. Define $\{ {{\cal P}_q}\} _{q = 0}^{Q - 1}$ as the pilot index set of ${\bf{x}}_{\rm{p}}^{({n_{\rm u}})}$. ${{\cal P}_{\rm nz}}$ and ${{\cal P}_q}$ ($q \ne \frac{{Q - 1}}{2}$) are the corresponding index sets of non-zero pilots and their surrounding guard pilots. The original pilot signal ${\bf{x}}_{\rm{p}}^{({n_{\rm u}})}$ is then transformed into the DD domain by an unitary transformation:
\begin{equation}
{\bf{x}}_{{\rm{DD}},{\rm{P}}}^{({n_{\rm u}})} = ({{\bf{F}}_N} \otimes {{\bf{I}}_M}){\bf{F}}_{MN}^{\rm H}{\bf{x}}_{\rm{p}}^{({n_{\rm u}})} = {\bf{Px}}_{\rm{p}}^{({n_{\rm u}})},
\label{eq7}
\end{equation}
where ${\bf{x}}_{{\rm{DD}},{\rm{P}}}^{({n_{\rm u}})}\in {\mathbb{C}^{MN \times MN}}$ is the DD-domain pilots, ${\bf{P}} = ({{\bf{F}}_N} \otimes {{\bf{I}}_M}){{\bf{F}}^{\rm H}}\in {{\mathbb{C}}^{MN\times MN}}$ is a unitary matrix and ${\bf{F}} \in {\mathbb{C}^{MN \times MN}}$ is a DFT matrix. Key symbols in this subsection are notated in Table I.

Note that after unitary transformation the number of pilots increases from $G(2Q-1)$ to $MN$. To preserve the low overhead of the original pilot signal, $G(2Q-1)$ resource grids, whose indices are chosen from $\cal P$, are used for pilot transmission only. Such pilots are called dedicated pilots. The remaining resources grids are then all used for data transmission. Note that $MN-G(2Q-1)$ pilots would be transmitted along with data symbols using the same resources (in gray color), and they are called superimposed pilots. The dedicated
pilot overhead of the proposed frame could be calculated as ${\lambda _{\rm{p}}} = \frac{{G(2Q - 1)}}{{MN}}$, which is much lower than the pilot overhead ${\lambda _{\rm{p}}} = \frac{{(2L-1)(2Q - 1)}}{{MN}}$ of the existing scheme in \cite{Liu2022} because $2L-1$ is typically much larger than \mbox{$G$ \cite{Zhang2022}.}
It is noteworthy that the guard pilots are necessary in our channel estimation scheme due to the use of BEM based channel modeling and CS based channel estimation. According to the pilot pattern shown in Fig. 4, the receiver can obtain $Q$ interference-free received pilot signals, which enables us to leverage the structured sparsity of the channel for designing high-accuracy channel estimation algorithm. When some pilots or data were deployed in the guard region, the dedicated pilots would suffer from the interferences from non-zero pilots and data symbols, resulting in a mixed signal that cannot be separated into $Q$ interference-free, and thus leading to performance degradation.

\begin{figure*}
\centerline{\includegraphics[width=0.75\textwidth]{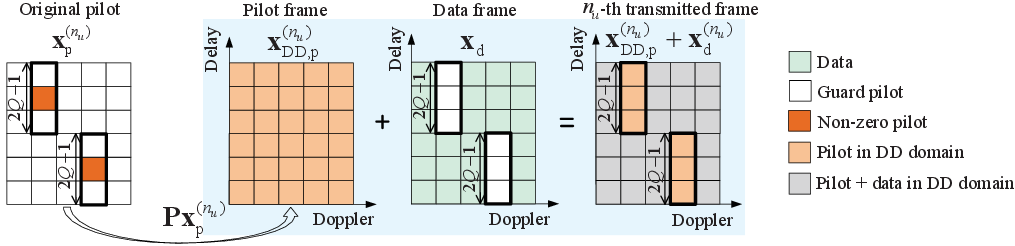}}
\caption{Proposed OTFS frame pattern with hybrid pilots for the $n_{\rm u}$-th user. All users share the same pilot positions in DD domain.}
\vspace{-15pt}
\label{fig4}
\end{figure*}

As shown in Fig. 4, $N_{\rm u}$ users share a common original pilot positions ${\cal P}_{\rm nz}$ and ${\cal P}$. To facilitate sparse channel estimation, each user is assigned with a random $ \pm 1$ sequence as non-zero pilots. Unlike the conventional OTFS frame pattern \cite{Shen2019} that requires non-overlapping  DD domain resources for different users, the proposed OTFS frame pattern allows all users to use the same pilot region. Thus, the number of users supported by the proposed pilot pattern is not limited by the resolution of DD grids. This allows a larger number of users to communicate with the BS with very low pilot overhead.

\vspace{-5pt}
\subsection{VBL-SOMP based UL Channel Estimation}
In high-mobility massive MIMO systems, wireless channels have both strong spatial and time correlation \cite{Para2018}. In this subsection, the spatial-time correlation of channels is exploited to reduce the complexity of channel estimation. Specifically, the complex exponential BEM (CE-BEM) \cite{Liu2022} and the spatial rotation BEM (SR-BEM) \cite{Xie2017} are used to model the channel in time domain and spatial domain, respectively. Thus the high-dimensional channel estimation problem can be converted into a low-dimensional BEM coefficient estimation problem. In time domain, the time-varying UL CIRs can be modeled as
\begin{equation}
{\bf{h}}_{{n_{\rm r}},{n_{\rm u}},l}^{{\rm{UL}}} = \sum\nolimits_{q = 0}^{Q - 1} {{{\bf{b}}_q}{c_{{n_{\rm r}},{n_{\rm u}},l,q}} + {{\bf{v}}_{{n_{\rm r}},{n_{\rm u}},l}}},
\label{eq8}
\end{equation}
where ${\bf{h}}_{{n_{\rm r}},{n_{\rm u}},l}^{{\rm{UL}}}\in {{\mathbb{C}}^{MN}}$ is the UL CIR vector, ${{\bf{b}}_q} = [B(0,q), \cdots ,$ $B(MN - 1,q){]^{\rm H}}$ is the $q$-th complex exponential basis function with ${{B}(n,q)} = {e^{\jmath {\omega _q}n}}$ with ${\omega _q} = \frac{{2\pi }}{{MN}}(q - \frac{{Q - 1}}{2})$, $Q$ is the BEM order and typically selected as $Q \ge 2\left\lceil {N{f_{\max }}{T_{\rm s}}} \right\rceil  + 1$, ${{c_{{n_{\rm r}},{n_{\rm u}},l,q}}}$ is the CE-BEM coefficient and ${{\bf{v}}_{{n_{\rm r}},{n_{\rm u}},l}}$ denotes channel modeling error.

In \cite{Para2018} and \cite{Xie2017}, the DFT matrix has been used to model channel in angle domain.
However, in practical communication systems, the angular resolution of antenna is limited by the number of antennas. The columns of the DFT matrix may not match the steering vectors, resulting in power leakage. To alleviate power leakage\footnote{Note that existing methods such as energy-focusing windowing \cite{Window2019}, dictionary learning \cite{Diclearning2018}, off-grid method \cite{Offgrid2018}, etc., can also be used to alleviate power leakage issues.}, the SR-BEM \cite{Xie2017} is adopted to model the CE-BEM coefficients
\begin{equation}
{{{\bf{ c}}}_{{n_{\rm u}},l,q}} = \left( {{\bf{r}}({\vartheta _{{n_{\rm u},l}}}) \odot {\bf{F}}_{{N_{\rm r}}}^{\rm H}} \right){{\bf{s}}_{{n_{\rm u}},l,q}} + {\bf{w}}_{{\rm mod},{\rm{s}}}^{({n_{\rm u}},l,q)},
\label{eq9}
\end{equation}
where ${{{\bf c}}_{{n_{\rm u}},l,q}} = {[{{c}}_{1,{n_{\rm u}},l,q}, \cdots ,{{c}}_{{N_{\rm r}},{n_{\rm u}},l,q}]^{\rm H}}$, ${\bf{r}}({\vartheta _{{n_{\rm u},l}}}) \odot {\bf{F}}_{{N_{\rm r}}}^{\rm H}$ is the basis matrix of the SR-BEM, ${{\bf{F}}_{{N_{\rm r}}}} \in {\mathbb{C}^{{N_{\rm r}} \times {N_{\rm r}}}}$ is a DFT matrix, ${{\bf{s}}_{{n_{\rm u}},l,q}}$ is SR-BEM coefficients, ${\bf{r}}({\vartheta _{{n_{\rm u}}}}) = {[1,{e^{j{\vartheta _{{n_{\rm u},l}}}}}, \cdots ,{e^{j({N_{\rm r}} - 1){\vartheta _{{n_{\rm u},l}}}}}]^{\rm H}}$ denotes the rotation vector with an rotation angle $\vartheta_{n_{\rm u},l}  \in [ - {\pi  \mathord{\left/
 {\vphantom {\pi  {{N_{\rm r}},}}} \right.
 \kern-\nulldelimiterspace} {{N_{\rm r}},}}{\pi  \mathord{\left/
 {\vphantom {\pi  {{N_{\rm r}}}}} \right.
 \kern-\nulldelimiterspace} {{N_{\rm r}}}}]$ and ${\bf{w}}_{{\rm mod},{\rm{s}}}^{({n_{\rm u}},l,q)}$ is the modeling error. The optimal rotation angle can be obtained by solving the optimization problem
 \begin{equation}
\vartheta _{{n_{\rm u}}}^* = \mathop {\arg \min }\limits_{\vartheta  \in [ - \frac{\pi }{{{N_{\rm r}}}},\frac{\pi }{{{N_{\rm r}}}}]} \sum\limits_{l = 0}^{L - 1} {\sum\limits_{q = 0}^{Q - 1} {||{{{\bf{\tilde c}}}_{{n_{\rm u}},l,q}} - ({\bf{r}}(\vartheta ) \odot {\bf{F}}_{{N_{\rm r}}}^{\rm H}){{{\bf{\tilde s}}}_{{n_{\rm u}},l,q}}|{|^2}} },
 \label{eq10}
 \end{equation}
where ${{{\bf{\tilde c}}}_{{n_{\rm u}},l,q}} = B(n,q){{{\bf{\tilde h}}}_{{n_{\rm u}},l,n}}$, ${{{\bf{\tilde h}}}_{{n_{\rm u}},l,n}} \in {\mathbb{C}^{{N_{\rm r}} \times 1}}$ denotes the coarse channel estimate obtained using the preamble \cite{Xie2017}, and ${{{\bf{\tilde s}}}_{{n_{\rm u}},l,q}} = {({\bf{r}}({\vartheta _{{n_{\rm u}},l}}) \odot {\bf{F}}_{{N_{\rm r}}}^{\rm H})^{ - 1}}{{{\bf{\tilde c}}}_{{n_{\rm u}},l,q}}$.
The optimal ${Q_{\rm{s}}}$ columns are selected to construct a ${Q_{\rm{s}}}$-order SR-BEM of the $n_{\rm u}$-th user according to the indices of the largest ${Q_{\rm{s}}}$ elements in ${{{\bf{\tilde s}}}_{{n_{\rm u}},l,q}}$, \emph{i.e.}, ${{\bf{D}}_{n_{\rm u}}} = {[{\bf{r}}({\vartheta _{n_{\rm u}}^ * }) \odot {\bf{F}}_{{N_{\rm r}}}^{\rm H}]_{\cal D}}$, where ${\cal D}$ is the index set of the ${Q_{\rm{s}}}$  largest elements in ${{{\bf{\tilde s}}}_{{n_{\rm u}},l,q}}$.
Then the CE-BEM coefficients can be modeled by the SR-BEM as
\begin{equation}
{c_{{n_{\rm r}},{n_{\rm u}},l,q}} = \sum\nolimits_{{q_{\rm s}} = 0}^{{Q_{\rm{s}}} - 1} {{D_{{n_{\rm u}}}}[{n_{\rm r}},{q_{\rm s}}]{s_{{n_{\rm u}},l,q,{q_{\rm s}}}}}  + {{{\bf{\bar v}}}_{{n_{\rm u}},l,q}},
\label{eq11}
\end{equation}
where ${{D_{{n_{\rm u}}}}[{n_{\rm r}},{q_{\rm s}}]}$ is the $(n_{\rm r},q_{\rm s})$-th element of ${\bf{D}}_{n_{\rm u}}$, ${{s_{{n_{\rm u}},l,q,{q_{\rm s}}}}}$ is the SR-BEM coefficient, and ${{{\bf{\bar v}}}_{{n_{\rm u}},l,q}}$ is the modeling error introduced by SR-BEM. Substituting (11) and (8) into (3), the time-domain channel can be further expressed as a function of the SR-BEM coefficients as
\begin{equation}
\begin{array}{l}
{\bf{H}}_{\rm{T}}^{({n_{\rm r}},{n_{\rm u}})} = \sum\limits_{l = 0}^{L - 1} {\sum\limits_{q = 0}^{Q - 1} {{{\bf{\Pi }}^l}{\rm{diag}}} \left( {{{\bf{b}}_q}} \right)\left( {{{\bf{D}}_{{n_{\rm u}}}}({n_{\rm r}},:) \otimes {{\bf{I}}_L}} \right){{\bf{s}}_{{n_{\rm u}},q}}} \\
 + {\bf{W}}_{{\rm{mod}},{\rm{T}}}^{({n_{\rm r}},{n_{\rm u}})} + {\bf{W}}_{{\rm{mod}},{\rm{S}}}^{({n_{\rm r}},{n_{\rm u}})}
\end{array},
 \label{eq12}
\end{equation}
where ${{{\bf{D}}_{n_{\rm u}}}({n_{\rm r}},:)}\in {{\mathbb{C}}^{1\times {{Q}_{\text{s}}}}}$ denotes the $n_{\rm r}$ row of ${\bf{D}}_{n_{\rm u}}$, ${\bf{W}}_{\bmod ,{\rm{T}}}^{({n_{\rm r}},{n_{\rm u}})}\in {{\mathbb{C}}^{MN\times MN}}$ and ${\bf{W}}_{\bmod ,{\rm{S}}}^{({n_{\rm r}},{n_{\rm u}})}\in {{\mathbb{C}}^{MN\times MN}}$ are the modeling errors of CE-BEM and SR-BEM, respectively.  Substituting (12) into (2) and (1), the received signal can be 
decomposed into $Q$ separate equations as
\begin{equation}
{\left[ {{\bf{\bar U}}_q^{\rm H}} \right]_{{{\cal P}_q}}}{{\bf{y}}_{{\rm{DD}}}} = {\left[ {\bf{\Psi }} \right]_{{{\cal P}_{\frac{{Q - 1}}{2}}}}}{\bf{\bar D}}{{\bf{s}}_q} + {{{\bf{\tilde z}}}_q},{\rm{  }}q = 0, \cdots ,Q - 1,
 \label{eq13}
\end{equation}
where ${\left[ {{{{\bf{\bar U}}}_q}} \right]_{{{\cal P}_q}}} = {{\bf{I}}_{{N_{\rm r}}}} \otimes {[{{\bf{U}}_q}]_{{{\cal P}_q}}}\in {{\mathbb{C}}^{G{{N}_{\text{r}}}\times G{{N}_{\text{r}}}}}$ and ${{\bf{U}}_q} = {{\bf{B}}_{\rm{r}}}{\mathop{\rm diag}\nolimits} \left( {{{\bf{b}}_q}} \right){\bf{F}}_{MN}^{\rm H}\in {{\mathbb{C}}^{MN\times MN}}$ is a unitary matrix, ${\bf{\Psi }} = [{{\bf{\Psi }}^{(1)}}, \cdots ,{{\bf{\Psi }}^{({N_{\rm u}})}}]\in {{\mathbb{C}}^{MN{{N}_{\text{r}}}\times MNL{{N}_{\text{u}}}}}$ and ${{{\bf{\Psi }}^{({n_{\rm u}})}}} = ({{\bf{I}}_{{N_{\rm r}}}} \otimes {{\mathop{\rm diag}\nolimits} ({\bf{x}}_{\rm{p}}^{({n_{\rm u}})})})({{\bf{I}}_{{{MN}}}} \otimes {{\bf{F}}_L})\in {{\mathbb{C}}^{MN{{N}_{\text{r}}}\times MNL}}$ is an equivalent pilot matrix, ${\bf{\bar D}} = {\mathop{\rm blkdiag}\nolimits} \{ {{\bf{D}}_1} \otimes {{\bf{I}}_L}, \cdots ,{{\bf{D}}_{{N_{\rm u}}}} \otimes {{\bf{I}}_L}\} \in {{\mathbb{C}}^{L{{N}_{\text{r}}}\times L{{Q}_{\text{s}}}{{N}_{\text{u}}}}}$, ${{\bf{s}}_q} = {[{s_{1,0,q,0}}, \cdots ,{s_{{n_{\rm u}},l,q,{q_{\rm s}}}}, \cdots ,{s_{{N_{\rm u}},L - 1,q,{Q_{\rm{S}}} - 1}}]^{\rm H}}\in {{\mathbb{C}}^{L{{Q}_{\text{s}}}{{N}_{\text{u}}}}}$, and ${{{\bf{\tilde z}}}_q}$ is an error term including AWGN and the modeling error of CE-BEM and SR-BEM. Please refer to the Appendix A for the derivation of (13).
By combining the $Q$ equations in (13), a compact form can be obtained as
\begin{equation}
{\bf{Y}} = {\bf{\Phi S}} + {\bf{\tilde Z}},
 \label{eq14}
\end{equation}
where ${\bf{Y}} = [{[{{{\bf{\tilde y}}}_{{\rm{DD}}}}]_{{{\cal P}_0}}}, \cdots ,{[{{{\bf{\tilde y}}}_{{\rm{DD}}}}]_{{{\cal P}_{Q - 1}}}}] \in {{\mathbb{C}}^{G{N_{\rm{r}}} \times Q}}$, ${{[{{{\mathbf{\tilde{y}}}}_{\text{DD}}}]}_{{{\mathcal{P}}_{q}}}}$$={{\left[ \mathbf{\bar{U}}_{q}^{\text{H}} \right]}_{{{\mathcal{P}}_{q}}}}{{\mathbf{y}}_{\text{DD}}}$, ${\bf{\Phi }} = {{\bf{\Psi }}_{{{\cal P}_{\frac{{Q - 1}}{2}}}}}\bar {\bf{D}} \in {{\mathbb{C}}^{G{{N}_{\text{r}}}\times L{{Q}_{\text{s}}}{{N}_{\text{u}}}}}$, ${\bf{S}} = [{{\bf{s}}_0}, \cdots ,{{\bf{s}}_{Q - 1}}]\in {{\mathbb{C}}^{L{{Q}_{\text{s}}}{{N}_{\text{u}}}\times Q}}$ and ${\bf{\tilde Z}} = [{{{\bf{\tilde z}}}_0}, \cdots ,{{{\bf{\tilde z}}}_{Q - 1}}]$.

Note that after CE-BEM and SR-BEM modeling, only  ${N_{\rm u}}{Q_{\rm{}}}{Q_{\rm{s}}}L$ unknown BEM coefficients in (14), instead of $MN{N_{\rm r}}{N_{\rm u}}{L}$ unknown channel parameters in ${{\bf{H}}_{\rm{T}}}$, need to be estimated, which greatly reduces the complexity of  UL channel estimation. Moreover, since there are only $K$ significant paths among $L$ multipaths, there are only $N_{\rm u}{Q_{\rm S}}K$ non-zero elements in $ {{\bf{s}}_q} $, $q=0,\cdots,Q-1$, which inspires us to adopt CS technique to solve (14).

To exploit the structured sparsity, we rearrange the elements of $\bf S$ in a block-wise manner, (14) can be \mbox{rewritten as}
\begin{equation}
{\bf{Y}} = \underbrace {{\bf{\Phi \Theta }}}_{{\bf{\bar \Phi }}}\underbrace {{{\bf{\Theta }}^{\rm T}}{\bf{S}}}_{{\bf{\bar S}}} + {\bf{\tilde Z}},
\label{eq15}
\end{equation}
where ${\bf{\Theta }} = [{{\bf{\Theta }}_1},{{\bf{\Theta }}_2}, \cdots ,{{\bf{\Theta }}_{L}}] \in {\mathbb{C}^{{N_{\rm u}}L{Q_{\rm{S}}} \times {N_{\rm u}}L{Q_{\rm{S}}}}}$ is a permutation matrix with ${{\bf{\Theta }}_k} = [{{\bf{e}}_k},{{\bf{e}}_{L + k}}, \cdots ,{{\bf{e}}_{({Q_{\rm{S}}{N_{\rm u}}} - 1)L + k}}]$, ${{\bf{e}}_k}$ is the $k$-th column of the identity matrix ${{\bf{I}}_{{Q_{\rm{S}}{N_{\rm u}}}L}}$. 
Consequently, $\bf{\bar S}$ shows a block-sparse structure with varying block lengths, as illustrated in Fig. 5.
Without loss of generality, it is assumed that the number of common paths of $N_{\rm u}$ users is $K_{\rm C}$, and thus each user has $K-K_{\rm C}$ individual paths. Clearly, the block sparsity level of ${\bf {\bar S}}$ is given as
\begin{equation}
\bar K = \left\{ {\begin{array}{*{20}{l}}
{K,\quad {{\Gamma}_1} =  \cdots  = {{\Gamma}_{{N_{\rm u}}}}}\\
{K{N_{\rm u}},\quad{{\Gamma}_1} \ne  \cdots  \ne {{\Gamma}_{{N_{\rm u}}}}}\\
{{N_{\rm u}}(K - {K_{\rm C}}) + {K_{\rm C}},\quad{\rm{others}}}
\end{array}} \right.,
\label{eq16}
\end{equation}
where ${\Gamma}_{n_{\rm u}}$ is the support of SR-BEM coefficients corresponding to the $n_{\rm u}$-th user.
Define the total common block support and individual block support of $\bf{\bar S}$ as ${\cal C}(\Gamma )$ and ${\cal I}(\Gamma )$, respectively,
where $\Gamma  = {\Gamma _1} \cap  \cdots  \cap {\Gamma _{{N_{\rm u}}}}$ is the row support of $\bf{\bar S}$. Then $\bf{\bar S}$ can be recovered by solving an optimization problem as follows
\begin{equation}
\begin{array}{l}
\mathop {\min }\limits_{{\bf{\bar S}}} ||{\bf{Y}} - {\bf{\bar \Phi }}{{{\bf{\bar S}}}_\Gamma }||_{\rm f}^2\\
{\rm{s}}{\rm{.t}}{\rm{.  }}\quad{\cal C}(\Gamma ) = {K_{\rm C}},{\cal I}(\Gamma ) = {N_{\rm u}}(K - {K_{\rm C}}).
\end{array}
\label{eq17}
\end{equation}


In this paper, the channel sparsity parameters $K$ and $K_{\rm C}$ are assumed to available to the BS. The channel sparsity $K$, can be obtained before implementing the proposed channel estimator by the priori-information aided iterative hard threshold algorithm \cite{Gaozhen2015}, adaptive support-aware algorithm \cite{Maxu2017TVT}, sparsity adaptive matching pursuit algorithm \cite{ZhangYi2016} etc. For example, the sparsity $K$ can be estimated by using the pseudo-random noise (PN) sequences. The transmitter sends a PN sequence with length $L_{\rm{PN}}$  ($L_{\rm{PN}} > L_{\rm{CP}}$). Then the receiver performs correlation calculation on the received PN sequences and the local PN sequences \cite{Maxu2017TVT}. Finally, the sparsity $K$ can be estimated by counting the number of correlation values greater than a predefined power threshold. As indicated in \cite{Rao2014}, the number of common paths $K_{\rm C}$ can be acquired from offline channel propagation measurement at the BS or long-term stochastic learning and estimation. For instance, the common sparsity $K_{\rm C}$ can be estimated by measuring the angles and distances of propagation scatterer centers of different users \cite{Rao2014}.

The multiple measurement vector problem in (17) can be solved by using existing CS algorithms such as simultaneous orthogonal matching pursuit (SOMP) algorithm \cite{Shen2019} and block SOMP (BSOMP) algorithm \cite{Gong2017}. However, the non-zero block length of $\bf{\bar S}$ is not fixed, which will degrade the recovery performance of SOMP and BSOMP algorithms. 

\begin{figure}
\centerline{\includegraphics[width=0.49\textwidth]{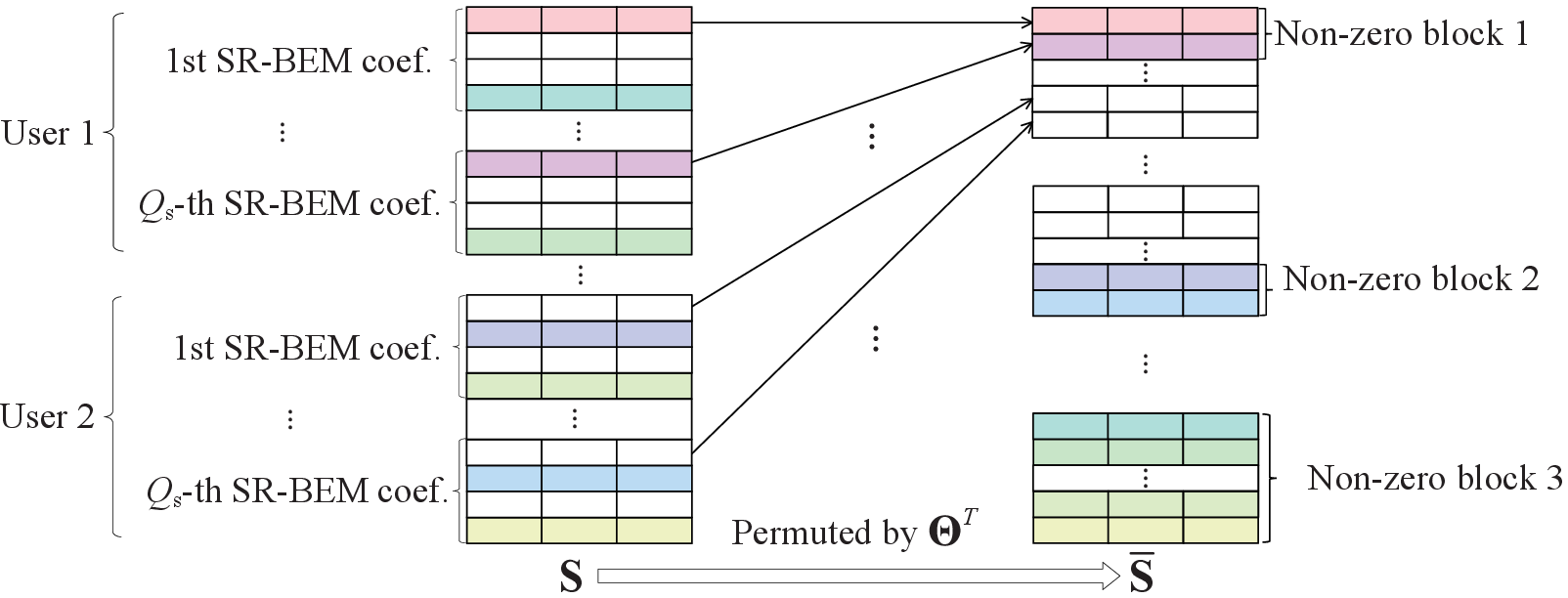}}
\caption{The block sparse structure of ${{\bf{\bar S}}}$ with $L = 4$, $N_{\rm u}=2$ and $Q_{\rm s}=2$. (The 1st and 4th paths of user 1 are non-zero, and the 2nd and 4th paths of user 2 are non-zero.)\vspace{-5pt}}
\vspace{-15pt}
\label{fig5}
\end{figure}

Inspired by the common block and individual block sparse structure of $\bf{\bar S}$, an variable-block-length SOMP (VBL-SOMP) algorithm is proposed to solve (17). The operation procedure of the VBL-SOMP algorithm is shown in {\bf{Algorithm 1}}.
{\bf{Algorithm 1}} mainly consists of two modules, steps (1)-(5) is designed to solve the common block sparse solution of SR-BEM coefficients, while the steps (6)-(9) is designed to solve its individual block sparse solution. It is noteworthy that the proposed VBL-SOMP algorithm has higher recovery accuracy than the traditional SOMP \cite{Shen2019} and BSOMP \cite{Gong2017} algorithms. The main reason lies in two aspects. On the one hand, the VBL-SOMP algorithm deals with common sparse blocks and individual sparse blocks separately, and its total block sparsity is smaller than the BSOMP algorithm that treats all blocks as individual sparse blocks. In the case of the same number of pilots, the smaller the block sparsity, the higher the recovery accuracy of channel estimation. On the other hand, the VBL-SOMP algorithm makes full use of the grouping characteristics of sparse blocks. In each iteration, VBL-SOMP first selects the most correlated group individual sparse blocks, and then identifies individual sparse blocks within the group, which enhances the identification accuracy of the support of individual sparse blocks. After ${{\bf{\hat{ \bar S}}}}$ is recoverd, we can obtain the coarse UL channel estimates ${\bf{\hat h}}_{{n_{\rm r}},{n_{\rm u}},l}^{{\rm{CUL}}}$ by substituting the estimated SR-BEM coefficient matrix ${\bf{\hat S}} = {\bf{\Theta \hat {\bar S}}}$ into (11) and (8).

\vspace{-0pt}
\subsection{ Savitzky-Golay Channel Smoothing}
In this paper, the BEM is used to model the doubly selective channels, and the UL channel is obtained by estimating the BEM coefficients. However, the BEM modeling error leads misalignment between the true and the estimated UL channels in the header and tail regions of multiple consecutive OTFS frames, as shown by the green line in Fig. 6. This will not only reduce the accuracy of UL channel estimation, but also degrades DL channel prediction accuracy. To reduce the UL channel estimation error, the SG smoothing method \cite{Schafer2011} is used to smooth the estimated UL channel samples. The key motivation for choosing the SG smoothing method lies in its ability to enhance the accuracy of UL channel estimation while preserving the channel’s shape characteristics (variation trend) \cite{Schafer2011}. The smoothed UL channel is then used for DL channel prediction.
SG is a data smoothing method based on local least squares (LS) polynomial approximation. Its basic idea is to construct a $Q_{\rm {sg}}$-order polynomial to fit a set of $2N_{\rm{sg}}+1$ samples centered on the current sample, and then use the obtained polynomial coefficients to reevaluate the current sample. In this paper, we propose a SG method to refine the estimated UL channel in a parallel manner. The estimated channels corresponding to $N_{\rm t}$ UL frames can be stacked as ${{{\bf{\hat H}}}^{{\rm{CUL}}}} = {[{({\bf{\hat H}}_1^{{\rm{CUL}}})^{\rm H}}, \cdots ,{({\bf{\hat H}}_{{N_{\rm t}}}^{{\rm{CUL}}})^{\rm H}}]^{\rm H}} \in {\mathbb{C}^{MN{N_{\rm t}} \times {N_{\rm r}}{N_{\rm u}}L}}$, where ${\bf{\hat H}}_{{n_{\rm t}}}^{{\rm{CUL}}} = [{\bf{\hat h}}_{0,1,0}^{{\rm{CUL,}}{n_{\rm t}}}, \cdots ,{\bf{\hat h}}_{{N_{\rm r}-1},{N_{\rm u}},L}^{{\rm{CUL,}}{n_{\rm t}}}]$ is the estimated UL channel matrix of the $n_{\rm t}$-th frame. First, for the $n_{\rm {sg}}$ -th channel sample, a smoothing window range ${N_{{\rm{sg}}}} \le {n_{{\rm{sg}}}} \le {N_{{\rm{sg}}}}$ is selected. Then, these $2{N_{{\rm{sg}}}} + 1$ UL channel samples are used to fit a polynomial with order ${Q_{{\rm{sg}}}}$, \emph{i.e.}, $h(x) = {c_0}{x^0} + {c_1}{x^1} +  \cdots  + {c_{{Q_{{\rm{sg}}}}}}{x^{{Q_{{\rm{sg}}}} - 1}}$, where $\{ {x^q}\} _{q = 0}^{{Q_{{\rm{sg}}}} - 1}$ is the polynomial function. According to the polynomial function, a basis matrix can be constructed as
\begin{equation}
\footnotesize{{{\bf{B}}_{{\rm{SG}}}} = \left[ {\begin{array}{*{20}{c}}
1&{ - {N_{{\rm{sg}}}}}&{{{( - {N_{{\rm{sg}}}})}^2}}& \cdots &{{{( - {N_{{\rm{sg}}}})}^{{Q_{{\rm{sg}}}}}}}\\
1&{ - {N_{{\rm{sg}}}} + 1}&{{{( - {N_{{\rm{sg}}}} + 1)}^2}}& \cdots &{{{( - {N_{{\rm{sg}}}} + 1)}^{{Q_{{\rm{sg}}}}}}}\\
 \vdots & \vdots & \vdots & \vdots & \vdots \\
1&0&{{0^2}}& \cdots &{{0^{{Q_{{\rm{sg}}}}}}}\\
 \vdots & \vdots & \vdots & \vdots & \vdots \\
1&{{N_{{\rm{sg}}}}}&{N_{{\rm{sg}}}^2}& \cdots &{N_{{\rm{sg}}}^{{Q_{{\rm{sg}}}}}}
\end{array}} \right].}
\label{eq18}
\end{equation}

The principle of SG channel smoothing is to reevaluate the $n_{\rm{sg}}$-th channel coefficient based on the fitted polynomial coefficients.
To smooth the $k$-th row of ${{{\bf{\hat H}}}^{{\rm{CUL}}}}$, the $(2{N_{\rm{sg}}} + 1)$ rows surrounding the $k$-th row of ${{{\bf{\hat H}}}^{{\rm{CUL}}}}$ are extracted as the input matrix, \emph{i.e.}, ${{{\bf{\hat H}}}^{{\rm{CUL,}}k}} = {[{({{{\bf{\hat H}}}^{{\rm{CUL}}}}(k - {N_{\rm{sg}}},:))^{\rm H}}, \cdots ,{({{{\bf{\hat H}}}^{{\rm{CUL}}}}(k + {N_{\rm{sg}}},:))^{\rm H}}]^{\rm H}}$. Then the polynomial coefficient matrix can be obtained by
\begin{equation}
{\bf{\hat C}}_{{\rm{SG}}}^k = {({\bf{B}}_{{\rm{SG}}}^{\rm H}{{\bf{B}}_{{\rm{SG}}}})^{ - 1}}{\bf{B}}_{{\rm{SG}}}^{\rm H}{{{\bf{\hat H}}}^{{\rm{CUL,}}k}},
\label{eq19}
\end{equation}

\renewcommand{\algorithmicrequire}{\textbf{Input:}}
\renewcommand{\algorithmicensure}{\textbf{Output:}}
\begin{algorithm}
\footnotesize
\caption{\footnotesize {\bf {Algorithm 1~}} VBL-SOMP for UL Channel Estimation} 
  \begin{algorithmic}[1]
    \Require Received signals: ${\bf{Y}}$. Measurement matrix: ${\bf{\bar \Phi }}$. Number of non-zero channel paths $K$ and number of overlapped non-zero paths $K_{\rm C}$.
       \State Initialize the iteration counter $k=0$,  
the block support ${\bf{\Gamma }}^{(0)} = {[{\bf{\Gamma }}_0^{\rm T}, \cdots ,{\bf{\Gamma }}_{L - 1}^{\rm T}]^{\rm T}} = \emptyset $ and the residual ${{\bf{R}}^{(0)}} = {\bf{Y}}$.
       \For {$k\le{N_{\rm u}}(K - {K_{\rm C}}) + {K_{\rm C}}$}
      \State {Calculate block correlation ${\beta _l} = \sum\nolimits_i {\sum\nolimits_j {|{{\Lambda } _l}(i,j)|} } $ with \mbox{${{{\bf{\Lambda }}}_l} = {\bf{\bar \Phi }}_l^{\rm H}{{\bf{R}}^{(k)}}$ for $l \in [0,L - 1]$.}}
      \State Find the index $m$ of the maximum value in $\{ {\beta _l}\} _{l = 0}^{L - 1}$
       \If {$k \le K_{\rm C}$}
       \State Update support set ${\bf{\Gamma }}^{(k)} \leftarrow \{ {\bf{\Gamma }}:{{\bf{\Gamma }}_m} = {{\bf{1}}_{{N_{\rm u}}{Q_{\rm{S}}} \times 1}}\} $.
      \State Update residual ${{\bf{R}}^{(k)}} \leftarrow {\bf{Y}} - {{{\bf{\bar \Phi }}}_{{{\bf{\Gamma }}_m}}}({\bf{\bar \Phi }}_{{{\bf{\Gamma }}_m}}^\dag {\bf{Y}})$.
      \State Update measurement matrix ${\bf{\bar \Phi }} \leftarrow \{ {\bf{\bar \Phi }}:{{{\bf{\bar \Phi }}}_{{{\bf{\Gamma }}_m}}} = {\bf{0}}\}  $.
      \Else 
      \State Calculate correlation ${\beta _{l,{n_{\rm u}}}} = \sum\nolimits_i {\sum\nolimits_j {|{{\Lambda } _{l,{n_{\rm u}}}}(i,j)|} } $  with  ${{\bf{{\Lambda } }}_{l,{n_{\rm u}}}} = {\bf{\bar \Phi }}_{l,{n_{\rm u}}}^{\rm H}{{\bf{R}}^{(k)}}$ for ${n_{\rm u}} \in [1,{N_{\rm u}}]$.
     \State  Find the index $(n,n_{\rm u})$ of the largest value in ${\bm{\beta }} = $
 $[{\beta _{0,1}}, \cdots ,{\beta _{L - 1,{N_{\rm u}}}}]$.
     \State Update support set ${\bf{\Gamma }}^{(k)} \leftarrow \{ {\bf{\Gamma }}:{{\bf{\Gamma }}_{n,n_{\rm u}}} = {{\bf{1}}_{{Q_{\rm{S}}} \times 1}}\} $.
     \State Update residual ${{\bf{R}}^{(k)}} = {\bf{Y}} - {{{\bf{\bar \Phi }}}_{{{\bf{\Gamma }}_{n,{n_{\rm u}}}}}}({\bf{\bar \Phi }}_{{{\bf{\Gamma }}_{n,{n_{\rm u}}}}}^\dag {\bf{Y}})$.
      \State Update ${\bf{\bar \Phi }} \leftarrow \{ {\bf{\bar \Phi }}:{{{\bf{\bar \Phi }}}_{{{\bf{\Gamma }}_{n,n_{\rm u}}}}} = {\bf{0}}\}  $.
      \EndIf
       \State $k=k+1$;
      \EndFor
    \Ensure The estimated non-zero SR-BEM coefficients ${{{\bf{\widehat {\bar S}}}}_{{\rm{nz}}}} = {\bf{\bar \Phi }}_{|{\bf{\Gamma }}^{(k)}|}^\dag {\bf{Y}}$ and the complete SR-BEM coefficient ${\bf{\hat {\bar S}}}({\bf{\Gamma }}^{(k)}) = {{{\bf{\hat {\bar S}}}}_{{\rm{nz}}}}$.
  \end{algorithmic}
\end{algorithm}
\begin{figure}
\centerline{\includegraphics[width=0.49\textwidth]{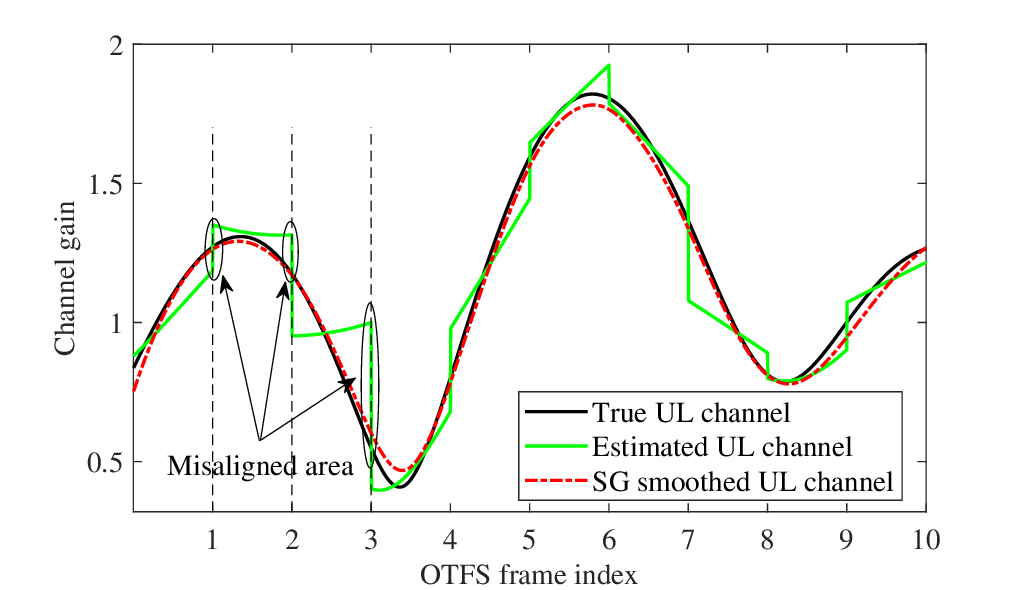}}
\caption{A snapshot of the estimated UL channel, the smoothed UL channel estimates and the true UL channel.\vspace{-0pt}}
\label{fig6}
\end{figure}

\noindent where ${\bf{\hat C}}_{{\rm{SG}}}^k$ is the fitted coefficient matrix. Then the smoothed $k$-th row of ${{{\bf{\hat H}}}^{{\rm{UL}}}}$ can be obtained as
\begin{equation}
{{{\bf{\hat H}}}^{{\rm{SG,}}k}} = {{\bf{B}}_{{\rm{SG}}}}(k,:){\bf{\hat C}}_{{\rm{SG}}}^k,
\label{eq20}
\end{equation}
where ${{{\bf{\hat H}}}^{{\rm{SG,}}k}}$ is the $k$-th row of ${{{\bf{\hat H}}}^{{\rm{UL}}}}$ smoothed by SG. We can sweep through $k=0,\cdots,MNN_{\rm t}-1$ to smooth all of ${{{\bf{\hat H}}}^{{\rm{CUL}}}}$.
Finally, the smoothed UL channel can be obtained as ${{{\bf{\hat H}}}^{{\rm{UL}}}} = {[{({{{\bf{\hat H}}}^{{\rm{SG}}}}(1,:))^{\rm H}}, \cdots ,{({{{\bf{\hat H}}}^{{\rm{SG}}}}(MN{N_{\rm t}},:))^{\rm H}}]^{\rm H}}$. Note that the polynomial order should satisfy ${Q_{{\rm{sg}}}} < 2{N_{{\rm{sg}}}} + 1$ to ensure the accuracy of solving for the polynomial coefficients, and the smoothing window length $2{N_{{\rm{sg}}}} + 1$ should be less than the number of channel samples $MN$. For example, in the simulation results, the smoothing window length and polynomial order are set to $2{N_{{\rm{sg}}}} + 1=11$ and ${Q_{{\rm{sg}}}}=5$, respectively. It can be seen from Fig. 6 that SG smoothing method can effectively address the misalignment problem, thereby significantly improving the accuracy of channel estimation.

\vspace{-0pt}
\subsection{Data Aided Channel Estimation Enhancement}
In order to further improve the accuracy of UL channel estimation, a data-aided channel estimation enhancement scheme is proposed by iteratively eliminating the interference between received pilot signals and data signals. At the BS, the received DD-domain data signal can be detected by using the LMMSE algorithm \cite{Ahmad2022}:
\begin{equation}
{{{\bf{\hat x}}}_{\rm{d}}} = {\left( {{\bf{\hat H}}_{{\rm{eff,d}}}^{\rm H}{{{\bf{\hat H}}}_{{\rm{eff,d}}}} + {\rm{SN}}{{\rm{R}}_{\rm{d}}}{{\bf{I}}_{{N_{\rm d}}}}} \right)^{ - 1}}{\bf{\hat H}}_{{\rm{eff,d}}}^{\rm H}{{\bf{y}}_{\rm{d}}},
\label{eq21}
\end{equation}
where ${{{\bf{\hat H}}}_{{\rm{eff,d}}}} = [{\bf{\hat H}}_{{\rm{eff,d}}}^{(1)}, \cdots ,{\bf{\hat H}}_{{\rm{eff,d}}}^{({N_{\rm u}})}]$, ${\bf{\hat H}}_{{\rm{eff,d}}}^{({n_{\rm u}})}\in {{\mathbb{C}}^{MN{{N}_{\text{r}}}\times {{N}_{\text{d}}}}}$ is the estimated DD-domain channel corresponding to the data symbols, $N_{\rm d}$ is the number of data symbols, ${{\mathop{\rm SNR}\nolimits} _{\rm{d}}} = {\sigma ^2}/\sigma _{\rm{d}}^2$ denotes the ratio of noise power to the average power of data, and ${\bf y}_{\rm d}$ is the received data signal. 
Note that the complexity of the LMMSE detector in (21) is $O(N_{\rm d}^3 + N_{\rm d}^2{M^2}{N^2}N_{\rm r}^2)$, which is unaffordable when the values of the number of data symbols is large. The complexity of the LMMSE algorithm in (21) mainly lies in the matrix inversion operations and matrix multiplication. Since ${\sigma ^2}/\sigma _{\rm{d}}^2 > 0$ for finite SNR ranges, the conjugate gradient algorithm \cite{Yin2014} can be used to reduce the complexity of the matrix inversion operation by equivalently formulating (21) as ${{\bf{r}}_{\rm{d}}} = {{\bf{\Xi }}}{{\bf{x}}_{\rm{d}}}$, where ${{\bf{r}}_{\rm{d}}} = {\bf{\hat H}}_{{\rm{eff,d}}}^{\rm H}{{\bf{y}}_{\rm{d}}}$ and ${{\bf{\Xi }}} = {\bf{\hat H}}_{{\rm{eff,d}}}^{\rm H}{{{\bf{\hat H}}}_{{\rm{eff,d}}}} + {{\mathop{\rm SNR}\nolimits} _{\rm{d}}}{{\bf{I}}_{{N_{\rm d}}}}$. The sparse structure of ${{{\bf{\hat H}}}_{{\rm{eff,d}}}}$ is exploited to further reduce the complexity of ${{\bf{\hat H}}_{{\rm{eff,d}}}^{\rm H}{{{\bf{\hat H}}}_{{\rm{eff,d}}}}}$. ${\bf{\Xi }}$ can be expressed as ${\bf{\Xi }} = {\mathop{\rm blkdiag}\nolimits} \{ {{\bf{\Xi }}_{1,1}}, \cdots ,{{\bf{\Xi }}_{{N_{\rm u}},{N_{\rm u}}}}\}  + {\rm{SN}}{{\rm{R}}_{\rm{d}}}{{\bf{I}}_{{N_{\rm d}}}} $, where ${{\bf{\Xi }}_{m,n}} = \sum\nolimits_{{n_{\rm r}} = 1}^{{N_{\rm r}}} {{{({\bf{H}}_{{\rm{eff,d}}}^{({n_{\rm r}},m)})}^{\rm H}}{\bf{H}}_{{\rm{eff,d}}}^{({n_{\rm r}},n)}}$ which can be calculated as
\begin{equation}
{{\bf{\Xi }}_{m,n}} = \sum\nolimits_{{n_{\rm r}} = 1}^{{N_{\rm r}}} {\sum\nolimits_{i = 1}^{KN} {\sum\nolimits_{j = 1}^{KN} {{{{\bf{\mathord{\buildrel{\lower3pt\hbox{$\scriptscriptstyle\frown$}} 
\over H} }}}_{{n_{\rm r}},m,i}}} } } {{{\bf{\mathord{\buildrel{\lower3pt\hbox{$\scriptscriptstyle\smile$}} 
\over H} }}}_{{n_{\rm r}},m,j}},
\label{eq22}
\end{equation}
where ${{{\bf{\mathord{\buildrel{\lower3pt\hbox{$\scriptscriptstyle\frown$}} 
\over H} }}}_{{n_{\rm r}},m,i}} = {{\bf{\Pi }}^{{{\cal U}_m}(i)}}{\mathop{\rm diag}\nolimits} ({{{\bf{\mathord{\buildrel{\lower3pt\hbox{$\scriptscriptstyle\frown$}} 
\over h} }}}_{{n_{\rm r}},m,i}})$, ${{{\bf{\mathord{\buildrel{\lower3pt\hbox{$\scriptscriptstyle\smile$}} 
\over H} }}}_{{n_{\rm r}},n,j}} = {{\bf{\Pi }}^{{{\cal V}_n}(j)}}{\mathop{\rm diag}\nolimits} ({{{\bf{\mathord{\buildrel{\lower3pt\hbox{$\scriptscriptstyle\smile$}} 
\over h} }}}_{{n_{\rm r}},n,j}})$, ${{{\bf{\mathord{\buildrel{\lower3pt\hbox{$\scriptscriptstyle\frown$}} 
\over h} }}}_{{n_{\rm r}},m,i}}$ and ${{{\bf{\mathord{\buildrel{\lower3pt\hbox{$\scriptscriptstyle\smile$}} 
\over h} }}}_{{n_{\rm r}},n,j}}$ are the vectors composed of diagonal elements of ${({\bf{\hat H}}_{{\rm{eff}}}^{({n_{\rm r}},i)})^{\rm H}}$ and ${\bf{\hat H}}_{{\rm{eff}}}^{({n_{\rm r}},j)}$ after cyclically shifting them according to their non-zero row index ${{{\cal U}_m}(i)}$ and non-zero column index ${{{\cal V}_n}(j)}$. It is worth noting that there are at most $KN$ non-zero elements in each row and column of ${\bf{\hat H}}_{{\rm{eff,d}}}^{({n_{\rm r}},{n_{\rm u}})}$, \emph{i.e.}, $|{{\cal U}_m}(i)| = |{{\cal V}_n}(j)| = KN$. By utilizing this sparse structure, the complexity of calculating ${\bf{\Xi }}$ can be greatly reduced.

The receiver undertakes an iterative process to improve channel estimation accuracy after obtaining the detected data ${\bf {\hat x}}_{\rm d}$. This iterative process involves channel estimation, interference cancellation, and data detection.  At the $k$-th iteration, the receiver removes the received data signal ${{\bf{y}}_{\rm{d}}}$ from the received signal to update the pilot signal ${\bf{\hat y}}_{\rm{p}}^{(k)} = {{\bf{y}}_{{\rm{DD}}}} - {{\bf{y}}_{\rm{d}}}$. The interference-canceled pilot signal is then used for the next channel estimation by replacing ${{\bf{y}}_{{\rm{DD}}}}$ in (13) with ${{{\bf{\hat y}}}_{\rm{p}}}$. After obtaining the updated channel ${\bf{\hat H}}_{{\rm{eff,d}}}^{(k)}$, the pilot signal is removed from the received signal to update data signal ${\bf{\hat y}}_{\rm{d}}^{(k)} = {{\bf{y}}_{{\rm{DD}}}} - {\bf{\hat H}}_{{\rm{eff,d}}}^{(k)}{{\bf{x}}_{\rm{p}}}$. The interference-canceled data signal ${\bf{\hat y}}_{\rm{d}}^{(k)}$ is used for the next data detection. 

\section{SBEE based DL Channel Prediction}
In this section, an SBEE channel predictor is proposed to address the problem of channel aging caused by fast time-varying channels, with the aid of estimated UL channel samples. As shown in Fig. 7, the channel samples of $N_{\rm t}$ UL OTFS frames are modeled by a set of Slepian sequences, and then SBEE predicts the channel of $N_{\rm f}$ DL OTFS frames by iteratively extrapolating the Slepian coefficients.
At the BS, the receiver outputs the smoothed UL channel ${{{\bf{\hat H}}}^{{\rm{UL}}}}$ which is then taken as the input of the SBEE predictor. ${{{\bf{\hat H}}}^{{\rm{UL}}}}$ can be fitted by $Q_{\rm{SP}}$ ($Q_{\rm{SP}} \ll MN$) Slepian sequences \cite{Qu2021} as
\begin{equation}
{{{\bf{\hat H}}}^{{\rm{UL}}}} = {{{\bf{\bar B}}}_{{\rm{SP}}}}{\bf{C}}_{{\rm{SP}}}^{{\rm{UL}}} + {\bf{W}}_{{\rm{SP}}}^{{\rm{UL}}},
\label{eq23}
\end{equation}
where ${{{\bf{\bar B}}}_{{\rm{SP}}}} = {{\bf{I}}_{{N_{\rm t}}}} \otimes {{\bf{B}}_{{\rm{SP}}}}$, ${\bf{C}}_{{\rm{SP}}}^{{\rm{UL}}}  \in {\mathbb{C}^{{N_{\rm t}}{Q_{{\rm{SP}}}} \times {N_{\rm r}}{N_{\rm u}}K}}$ is the Slepian coefficient matrix, ${\bf{W}}_{{\rm{SP}}}^{{\rm{UL}}}$ is the modeling error matrix, and ${{\bf{B}}_{{\rm{SP}}}}$ is the basis matrix of the Slepian sequences whose columns are the eigenvectors of matrix $\bm{\mathcal{D}}$
\begin{equation}
\bm{\mathcal{D}}{{\bf{b}}_{{\rm{SP}},q}} = {\chi_q} {{\bf{b}}_{{\rm{SP}},q}},
\label{eq24}
\end{equation}
where the entries of $\bm{\mathcal{D}}$ are
\begin{equation}
\bm{\mathcal{D}}(n,m) = \frac{{\sin (2\pi {f_{\max }}(n - m))}}{{\pi (n - m)}},
\label{eq25}
\end{equation}
 $n,m = 0, \cdots ,MN - 1$ and $q=0,\cdots,Q_{\rm{SP}}-1$. Then the UL Slepian coefficient matrix can be calculated as
\begin{equation}
{\bf{\hat C}}_{{\rm{SP}}}^{{\rm{UL}}} = {({\bf{\bar B}}_{{\rm{SP}}}^{\rm H}{{{\bf{\bar B}}}_{{\rm{SP}}}})^{ - 1}}{\bf{\bar B}}_{{\rm{SP}}}^{\rm H}{{{\bf{\hat H}}}^{{\rm{UL}}}}.
\label{eq26}
\end{equation}

\begin{figure}
\centerline{\includegraphics[width=0.5\textwidth]{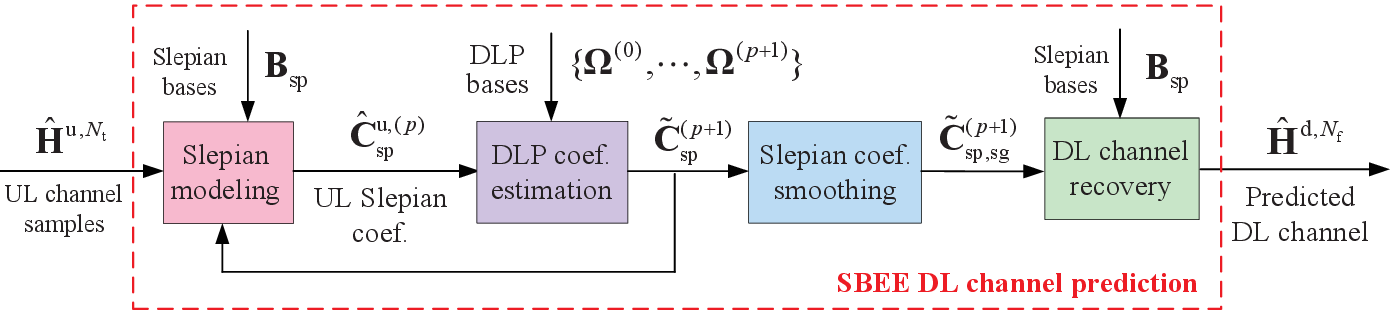}}
\caption{Block diagram of the proposed SBEE DL channel prediction scheme.}
\vspace{-0pt}
\label{fig7}
\end{figure}
\renewcommand{\algorithmicrequire}{\textbf{Input:}}
\renewcommand{\algorithmicensure}{\textbf{Output:}}
\begin{algorithm}[t]
\footnotesize
  \caption{\footnotesize {\bf {Algorithm 2~}} SBEE DL Channel Predictor for massive-MIMO-OTFS  } 
  \begin{algorithmic}[1]
    \Require 
      UL Slepian coefficient matrix  ${{{\bf{\bar C}}}_{{\rm{SP}}}}$, prediction length $N_{\rm f}$, prediction step size $\Delta $, DLP order $Q_{\rm{DLP}}$.
    \State Calculate the initial DLP matrix ${\bf{\Omega }}$ according to (26).
     \State Calculate the initial DLP coefficient ${{{\bf{\hat C}}}^{{\rm{DLP}},(0)}}$ according to (29).
       \While {$p < \frac{{{N_{\rm f}}}}{\Delta }$}
      \State $p=p+1$;
      \State Update the DLP matrix according to (30).
      \State Calculate the predicted Slepian coefficients according to (31).
     \State Update the Slepian coefficients according to (32).
     \State Smooth the Slepian coefficients by using VSG smoothing operation.
       \State Calculate the DLP coefficients according to (34).
    \EndWhile
    \Ensure Predicted $N_{\rm f}$-frame DL channel ${{{\bf{\hat H}}}^{{\rm{DL}},{N_{\rm f}}}}$ according to (35).   
  \end{algorithmic}
\end{algorithm}

The Slepian sequences are windowed (using rectangular windows) versions of infinite discrete prolate spheroidal sequences that are exactly band limited to the normalized Doppler frequency range $[ - {f_{\max }}/\Delta f,{f_{\max }}/\Delta f]$ \cite{Zemen2005}, where $\Delta f$ is the subcarrier spacing. As shown in \cite{Zemen2005}, the Slepian sequences outperform other commonly used BEMs in approximating a Jakes’ channel over a wide range of Doppler spreads for the same number of parameters. Hence, we adopt the Slepian sequences to model the estimated UL channels in this paper.
Our goal is to obtain the DL channels by predicting the DL Slepian coefficients based on the UL Slepian coefficient matrix ${\bf{\hat C}}_{{\rm{SP}}}^{{\rm{UL}}}$.
Note that the existing methods in \cite{Para2009,Para2015,Para2018,Wu2021,Kalman2000,Talaei2021,Prony2020,
MatrixPencil2022,Peng2019,Dong2019,Jiang2022,Chu2022,Mattu2022} directly predict DL channel based on the $MN{N_{\rm r}}{N_{\rm u}}{N_{\rm t}}K$ CIRs in ${{{\bf{\hat H}}}^{{\rm{UL}}}}$, which needs high-dimensional matrix operations or time-consuming offline training. 
The proposed SBEE predictor predicts the DL channel by using only $Q_{\rm SP}{N_{\rm r}}{N_{\rm u}}{N_{\rm t}}K$ Slepian coefficients, which reduces the complexity..

In this paper, the discrete Legendre polynomials (DLP) is adopted to fit Slepian coefficients due to its excellent fitting performance and good numerical stability \cite{Senol2021}. The Slepian coefficients can be accurately extrapolated by dynamically fitting the Slepian coefficients with DLP and solving a small number of DLP coefficients, which reduces the complexity of predicting the Slepian coefficients. 
The $(q+1)$-order ($q\ge1$) DLP is defined as 
\begin{equation}
{\varphi _{q + 1}}[t_n] = \frac{{(2q + 1)t}}{{q + 1}}{\varphi _q}[t_n] - \frac{q}{{q + 1}}{\varphi _{q - 1}}[t_n].
\label{eq27}
\end{equation}

Its first two terms are ${\varphi  _0}[t_n] = 1$ and ${\varphi  _1}[t_n] = t_n$ with ${t_n} = \frac{{2(n - 1)}}{{{N_{\rm t}} - 1}} - 1$, $n =1,\cdots,N_{\rm t}$.

Define ${{{\bf{\bar c}}}_{{\rm{SP}},{n_{\rm t}}}} = {[{[{\bf{\hat C}}_{{\rm{SP}}}^{{\rm{UL}}}]_{({n_{\rm t}} - 1){Q_{{\rm{SP}}}} + 1}}, \cdots ,{[{\bf{\hat C}}_{{\rm{SP}}}^{{\rm{UL}}}]_{{n_{\rm t}}{Q_{{\rm{SP}}}}}}]^{\rm H}}$, and let ${{{\bf{\bar C}}}_{{\rm{SP}}}} = {[{{{\bf{\bar c}}}_{{\rm{SP,1}}}}, \cdots ,{{{\bf{\bar c}}}_{{\rm{SP,}}{N_{\rm t}}}}]^{\rm H}} \in {\mathbb{C}^{{N_{\rm t}} \times {N_{\rm r}}{N_{\rm u}}K{Q_{{\rm{SP}}}}}}$, the Slepian coefficient ${{\bar C}_{{\rm{SP}}}}({n_{\rm t}},k)$ can be fitted by
\begin{equation}
{{\bar C}_{{\rm{SP}}}}({n_{\rm t}},k) = \sum\nolimits_{q = 0}^{{Q_{{\rm{SP}}}} - 1} {{{c_{k,q}^{{\rm{DLP}}}}}{\varphi _q}[t_{n_{\rm t}}]}  + {W_{\rm DLP}}[{n_{\rm t}},k],
\label{eq28}
\end{equation}
where ${{\bar C}_{{\rm{SP}}}}({n_{\rm t}},k)$ ($1\le {n_{\rm t}}\le {N_{\rm t}}, 1\le k\le {N_{\rm r}}{N_{\rm u}}K{Q_{\rm SP}}$) is the ${n_{\rm t}}$-th row and the $k$-th column of ${{{\bf{\bar C}}}_{{\rm{SP}}}}$, ${c_{k,q}^{{\rm{DLP}}}}$ is the DLP coefficients and ${W_{\rm DLP} }[{n_{\rm t}},k]$ is the DLP fitting error.
The vectorized version of (28) can be written as
\begin{equation}
{{{\bf{\bar C}}}_{{\rm{SP}}}} = {\bf{\Omega }}{{\bf{C}}^{{\rm{DLP}}}} + {\bf{{W} _{\rm DLP}}},
\label{eq29}
\end{equation}
where ${\bf{\Omega }} = [{{\bm{\varphi }}_0}, \cdots ,{{\bm{\varphi }}_{{Q_{\rm DLP}} - 1}}]\in {{\mathbb{C}}^{{{N}_{\text{t}}}\times {{Q}_{\text{DLP}}}}}$ is the $Q_{\rm{DLP}}$-order DLP basis matrix with ${{\bm{\varphi }}_q} = {[{\varphi _q}[t_1], \cdots ,{\varphi _q}[t_{N_{\rm t}}]]^{\rm H}}$, ${{\bf{C}}^{{\rm{DLP}}}} = [{\bf{c}}_1^{{\rm{DLP}}}, \cdots ,{\bf{c}}_{{N_{\rm r}}{N_{\rm u}}K{Q_{{\rm{SP}}}}}^{{\rm{DLP}}}]\in {{\mathbb{C}}^{{{Q}_{\text{DLP}}}\times {{N}_{\text{r}}}{{N}_{\text{u}}}K{{Q}_{\text{SP}}}}}$ with ${\bf{c}}_k^{{\rm{DLP}}} = {[c_{k,0}^{{\rm{DLP}}}, \cdots ,c_{k,{Q_{{\rm{DLP}}-1}}}^{{\rm{DLP}}}]^{\rm H}}\in {{\mathbb{C}}^{{{Q}_{\text{DLP}}}}}$ is the DLP coefficient matrix and ${{\bf W}_{\rm DLP}}$ is the modeling error matrix introduced by DLP.

In the ($0$)-th iteration, the initial DLP fitting coefficient can be calculated as
\begin{equation}
{{{\bf{\hat C}}}^{{{\rm{DLP}},(0)}}}= {({{\bf{\Omega }}^{\rm H}}{\bf{\Omega }})^{ - 1}}{{\bf{\Omega }}^{\rm H}}{{{\bf{\bar C}}}_{{\rm{SP}}}}.
\label{eq30}
\end{equation}

To exploit the global temporal correlation of the $N_{\rm t}$ UL channel samples and the predicted channels for high accuracy long-term prediction,  the DLP is updated in each iteration according to the previous prediction. In the ($p+1$)-th ($p\ge 0$) iteration, a new DLP matrix is constructed as
\begin{equation}
{{\bf{\Omega }}^{(p + 1)}} = [ {\bm {\varphi} _0^{(p + 1)}, \cdots ,\bm {\varphi }_{{Q_{{\rm{DLP}}}} - 1}^{(p + 1)}} ].
\label{eq31}
\end{equation}
where $\bm{\varphi }_q^{(p + 1)} = {[{\varphi _q}[{t_1}], \cdots ,{\varphi _q}[{t_{{N_{\rm t}} + p\Delta }}]]^{\rm T}} \in {\mathbb{R}^{{N_{\rm t}} + p\Delta }}$, ${t_n} = \frac{{2(n - 1)}}{{{N_{\rm t}} + p\Delta  - 1}} - 1$, and $\Delta$ ($\Delta \ge 1$) is the step size of prediction. The DL Slepian coefficient prediction model is derived as 
\begin{equation}
\hat {\bar C}_{{\rm{SP}}}^{(p+1)}({N_{\rm t}} + p\Delta ,k) = \sum\nolimits_{q = 0}^{{Q_{{\rm{DLP}}}} - 1} {\hat c_{k,q}^{{\rm{DLP,}}(p)}{\varphi _q^{(p+1)}}[t_{{N_{\rm t}} + p\Delta} ]}.
\label{eq32}
\end{equation}

The Slepian coefficients are then updated by adding the predicted Slepian coefficients in the ($p+1$)-th iteration to the previous Slepian coefficients
\begin{equation}
{\bf{\tilde C}}_{{\rm{SP}}}^{(p + 1)} = {[{({\bf{\bar C}}_{{\rm{SP}}}^{})^{\rm H}}, \cdots ,{({[{\bf{\hat {\bar C}}}_{{\rm{SP}}}^{(p + 1)}]_{{N_{\rm t}+1}:{N_{\rm t}} + p\Delta }})^{\rm H}}]^{\rm H}}.
\label{eq33}
\end{equation}

To improve the prediction accuracy of DL Slepian coefficients, the Slepian coefficient ${\bf{\tilde C}}_{{\rm{SP}}}^{(p + 1)}$ is smoothed by the SG smoothing technique introduced in Section III-D.
\begin{equation}
{\bf{\tilde C}}_{{\rm{SP,SG}}}^{(p + 1)} = {\mathop{\rm SG}\nolimits} ({\bf{\tilde C}}_{{\rm{SP}}}^{(p + 1)})
\label{eq34}
\end{equation}
where ${\mathop{\rm SG}\nolimits} ( \cdot )$ denotes the SG smoothing operation.

Then the DLP coefficient matrix in the ($p+1$)-th iteration ${{{\bf{\hat C}}}^{{\rm{DLP}},(p + 1)}}$ can be updated by
\begin{equation}
{{{\bf{\hat C}}}^{{\rm{DLP}},(p + 1)}} = {({{\bf{\Omega }}^{(p + 1)}})^\dag }{\bf{\tilde C}}_{{\rm{SP}},{\rm{SG}}}^{(p + 1)}.
\label{eq35}
\end{equation}

By iteratively performing (31)-(35), we can obtain the predicted $N_{\rm f}$-frame ($N_{\rm f}=p\Delta$) DL Slepian coefficient matrix. Finally, the predicted $N_{\rm f}$-frame DL channels can be recovered by the predicted Slepian coefficient as
\begin{equation}
{{{\bf{\hat H}}}^{{\rm{DL}},{N_{\rm f}}}} = ({{\bf{I}}_{{N_{\rm f}}}} \otimes {{\bf{B}}_{{\rm{SP}}}}){\bf{\mathord{\buildrel{\lower3pt\hbox{$\scriptscriptstyle\frown$}} 
\over C} }}_{{\rm{SP}}}^{(p + 1)},
\label{eq36}
\end{equation}
where ${\bf{\mathord{\buildrel{\lower3pt\hbox{$\scriptscriptstyle\frown$}} 
\over C} }}_{{\rm{SP}}}^{(p + 1)} = {[{({\bf{\mathord{\buildrel{\lower3pt\hbox{$\scriptscriptstyle\frown$}} 
\over c} }}_{{\rm{SP,1}}}^{(p + 1)})^{\rm H}}, \cdots ,{({\bf{\mathord{\buildrel{\lower3pt\hbox{$\scriptscriptstyle\frown$}} 
\over c} }}_{{\rm{SP,}}{N_{\rm f}}}^{(p + 1)})^{\rm H}}]^{\rm H}}$ and ${\bf{\mathord{\buildrel{\lower3pt\hbox{$\scriptscriptstyle\frown$}} 
\over c} }}_{{\rm{SP,}}{n_{\rm f}}}^{(p + 1)} = {({[{\bf{ {\tilde C}}}_{{\rm{SP,SG}}}^{(p + 1)}]_{{N_{\rm t}} + {n_{\rm f}},1:{Q_{{\rm{SP}}}}}})^{\rm H}}, \cdots ,$\\${({[{\bf{ {\tilde C}}}_{{\rm{SP,SG}}}^{(p + 1)}]_{{N_{\rm t}} + {n_{\rm f}},({N_{\rm r}}{N_{\rm u}}K - 1){Q_{{\rm{SP}}}} + 1:({N_{\rm r}}{N_{\rm u}}K){Q_{{\rm{SP}}}}}})^{\rm H}}$.

The pseudo-code of the SBEE channel prediction algorithm is summarized in \textbf{Algorithm 2}.

\vspace{-0pt}
\section{Performance and Complexity Analysis}
In this section, the asymptotic error analysis for UL channel modeling and DL channel prediction are provided. Furthermore, the complexity of UL channel estimation and DL channel prediction is analyzed.

\vspace{-0pt}
\subsection{Asymptotic Performance Analysis}
The asymptotic error of CE-BEM based UL channel modeling is given in \textbf{Lemma 1}.

\theorembodyfont{\upshape}
\newtheorem*{proof}{\indent \it{Proof: }}
\newtheorem{lemma}{Lemma}
\begin{lemma}
For a given value $N$ and user velocity $v$, the asymptotic modeling error of the UL channel yields
\begin{equation}
\mathop {\lim }\limits_{M \to \infty } {{\mathop{\rm MSE}\nolimits} _{{\rm{mod}}}} = \mathop {\lim }\limits_{M \to \infty } \frac{{||{{\bf{H}}^{{\rm{UL}},{N_{\rm t}}}} - {{{\bf{\hat H}}}^{{\rm{UL}},{N_{\rm t}}}}||_{\rm F}^2}}{{||{{\bf{H}}^{{\rm{UL}},{N_{\rm t}}}}||_{\rm F}^2}} = 0
\label{eq37}
\end{equation}
when $Q_{\rm s}=N_{\rm r}$, where ${{{\bf{H}}^{{\rm{UL}},{N_{\rm t}}}}}$ and ${{{{\bf{\hat H}}}^{{\rm{UL}},{N_{\rm t}}}}}$ are the true and BEM modeled $N_{\rm t}$-frame UL channel matrices, respectively.
\end{lemma}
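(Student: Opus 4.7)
The plan is to decompose the total UL channel modeling error into its two sources, the time-domain CE-BEM residual and the spatial-domain SR-BEM residual, and to argue that both contributions vanish under the stated regime. From (12), for every antenna-user pair the residual splits as $\mathbf{H}_{\mathrm{T}}^{(n_{\mathrm{r}},n_{\mathrm{u}})} - \hat{\mathbf{H}}_{\mathrm{T}}^{(n_{\mathrm{r}},n_{\mathrm{u}})} = \mathbf{W}_{\mathrm{mod},\mathrm{T}}^{(n_{\mathrm{r}},n_{\mathrm{u}})} + \mathbf{W}_{\mathrm{mod},\mathrm{S}}^{(n_{\mathrm{r}},n_{\mathrm{u}})}$, so the triangle inequality reduces the problem to bounding each term separately. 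Summing over the $N_{\mathrm{t}}$ frames, $N_{\mathrm{r}}$ antennas and $N_{\mathrm{u}}$ users preserves this decomposition, and it suffices to upper bound the normalized Frobenius norm of each contribution.

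For the spatial term I would exploit the hypothesis $Q_{\mathrm{s}} = N_{\mathrm{r}}$ directly. The SR-BEM basis $\mathbf{D}_{n_{\mathrm{u}}} = [\mathbf{r}(\vartheta_{n_{\mathrm{u}}}^{*}) \odot \mathbf{F}_{N_{\mathrm{r}}}^{\mathrm{H}}]_{\mathcal{D}}$ selects $Q_{\mathrm{s}}$ columns from the $N_{\mathrm{r}} \times N_{\mathrm{r}}$ matrix $\mathbf{r}(\vartheta_{n_{\mathrm{u}}}^{*}) \odot \mathbf{F}_{N_{\mathrm{r}}}^{\mathrm{H}}$. Because $\mathbf{r}(\vartheta_{n_{\mathrm{u}}}^{*})$ has unit-modulus entries, this matrix equals $\mathrm{diag}(\mathbf{r}(\vartheta_{n_{\mathrm{u}}}^{*}))\,\mathbf{F}_{N_{\mathrm{r}}}^{\mathrm{H}}$, a product of two unitary factors and hence itself unitary. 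Taking $Q_{\mathrm{s}} = N_{\mathrm{r}}$ retains all $N_{\mathrm{r}}$ columns, yielding a complete orthonormal basis of $\mathbb{C}^{N_{\mathrm{r}}}$. Therefore every CE-BEM coefficient vector $\mathbf{c}_{n_{\mathrm{u}},l,q}$ is represented exactly in (11), the per-path residual $\mathbf{w}_{\mathrm{mod},\mathrm{s}}^{(n_{\mathrm{u}},l,q)}$ vanishes identically, and $\mathbf{W}_{\mathrm{mod},\mathrm{S}}^{(n_{\mathrm{r}},n_{\mathrm{u}})} = \mathbf{0}$ irrespective of $M$.

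For the CE-BEM term I would use the bandlimited structure of the Jakes' model. For fixed $N$ and $v$, the $MN$ samples $\{h_{n_{\mathrm{r}},n_{\mathrm{u}},l,n}\}_{n=0}^{MN-1}$ are uniform samples at rate $1/T_{\mathrm{s}} = M\Delta f$ of a continuous-time process whose Doppler spectrum is supported in $[-f_{\max}, f_{\max}]$, while the frame duration $MN T_{\mathrm{s}} = N/\Delta f$ stays fixed and the CE-BEM frequency set $\{\omega_q/T_{\mathrm{s}}\}$ is chosen to cover this band. The CE-BEM bases are orthogonal on $\{0,\ldots,MN-1\}$, so the LS-optimal coefficient is $\hat{c}_q = (MN)^{-1}\sum_n h[n]e^{-j\omega_q n}$, which is a Riemann sum of the corresponding continuous-time Fourier integral on $[0,N/\Delta f]$. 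As $M \to \infty$, this Riemann sum converges to the exact continuous-time Fourier coefficient, so the discrete CE-BEM approximation approaches the continuous Fourier partial sum. Since the Fourier support of the limiting channel is contained in the chosen frequency set, Parseval's identity forces the tail energy, and hence $\|\mathbf{v}_{n_{\mathrm{r}},n_{\mathrm{u}},l}\|^{2}/\|\mathbf{h}_{n_{\mathrm{r}},n_{\mathrm{u}},l}^{\mathrm{UL}}\|^{2}$, to zero.

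The main obstacle is in this third step: making the Riemann-sum convergence quantitative and uniform in the path index, and controlling the Dirichlet-kernel leakage that arises from Doppler frequencies sitting off the CE-BEM grid. I would handle this by writing the per-mode residual explicitly as a squared sinc, invoking the identity $\sum_{q\in\mathbb{Z}}\mathrm{sinc}^{2}(\alpha - q) = 1$ to quantify leakage outside the retained modes, and then using dominated convergence to commute the limit with the sum over antennas, users and paths entering the Frobenius-norm ratio. Once that limit is established, combining it with the identically zero SR-BEM residual via the triangle inequality yields $\lim_{M \to \infty} \mathrm{MSE}_{\mathrm{mod}} = 0$, as claimed.
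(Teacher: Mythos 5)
Your treatment of the spatial half is correct and in fact more complete than the paper's: Appendix~B simply sets the SR-BEM error to zero by assumption under $Q_{\rm s}=N_{\rm r}$, whereas you justify it by observing that ${\bf{r}}(\vartheta^{*})\odot{\bf{F}}_{N_{\rm r}}^{\rm H}=\mathrm{diag}({\bf{r}}(\vartheta^{*}))\,{\bf{F}}_{N_{\rm r}}^{\rm H}$ is unitary, so retaining all $N_{\rm r}$ columns represents every ${\bf{c}}_{n_{\rm u},l,q}$ exactly. The genuine gap is in your CE-BEM step, and it is exactly the point you flag as ``the main obstacle'': the claim that the Fourier support of the limiting channel is contained in the retained frequency set is false, and the proposed repair cannot close it. A Jakes process bandlimited to $[-f_{\max},f_{\max}]$ but observed on the fixed window $[0,N/\Delta f]$ has window-Fourier coefficients leaking onto all grid frequencies; for a Doppler component at an off-grid frequency $\nu$, the energy missed by the retained set $\mathcal{Q}$ is $\sum_{q\notin\mathcal{Q}}\mathrm{sinc}^{2}(\nu N/\Delta f-q)>0$. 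Crucially, in the limit you set up ($N$, $v$, $Q$ fixed, $M\to\infty$) both the continuous-frequency grid spacing $1/(MNT_{\rm s})=\Delta f/N$ and the window length are fixed, so this leakage is a strictly positive constant independent of $M$. Your Riemann-sum argument is sound as far as it goes, but what it proves is convergence of the discrete LS residual to the continuous-window projection residual --- a positive error floor --- not to zero. The identity $\sum_{q\in\mathbb{Z}}\mathrm{sinc}^{2}(\alpha-q)=1$ and dominated convergence only quantify and preserve that floor; they do not remove it. Executed honestly, your route contradicts the lemma rather than proving it, except in the measure-zero case of on-grid Doppler frequencies.

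The paper reaches zero by a mechanism your fixed-window analysis forecloses. It works with second-order statistics: it writes the per-path residual as $({\bf{I}}_{MN}-{\bf{B}}{\bf{B}}^{\rm H}){\bm{\hbar}}_{k}$, so that $\mathrm{MSE}_{\rm mod}\propto\mathrm{Tr}[({\bf{I}}_{MN}-{\bf{B}}{\bf{B}}^{\rm H})\bar{\bf{R}}]$ with $[\bar{\bf{R}}]_{t,t'}=J_{0}(2\pi f_{\max}T_{\rm s}(t-t'))$, and then takes the entrywise limit $T_{\rm s}\to 0$ to conclude $\bar{\bf{R}}\to{\bf{1}}_{MN\times MN}$, i.e., the channel becomes perfectly correlated (effectively constant) across the frame; a constant lies in the span of the DC column of ${\bf{B}}$, whence $\mathrm{Tr}[{\bf{B}}{\bf{B}}^{\rm H}{\bf{1}}_{MN\times MN}]=MN$ and the residual trace vanishes. (Whether that entrywise limit is legitimate is itself delicate, since the maximal continuous lag $MNT_{\rm s}=N/\Delta f$ stays fixed while the matrix dimension grows --- your analysis correctly keeps this lag structure fixed, which is precisely why you run into the leakage floor that the paper's limit suppresses.) To obtain the lemma's conclusion within your framework you would have to import the paper's premise in some form: let the intra-frame correlation tend to one, let $Q$ grow with $M$, or restrict the Doppler frequencies to the CE-BEM grid.
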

\begin{proof}
{The proof can be found in Appendix B.$\hfill\blacksquare$}
\end{proof}

Based on \textbf{Lemma 1}, the asymptotic error of the SBEE channel predictor is shown in \textbf{Theorem 1}.
\newtheorem{theorem}{Theorem}
\begin{theorem}
For a given value $N$ and user velocity $v$, the asymptotic performance of the SBEE predictor yields
\begin{equation}
\mathop {\lim }\limits_{{\sigma ^2} \to 0,M \to \infty } \frac{{||{{\bf{H}}^{{\rm{DL}},{N_{\rm f}}}} - {{{\bf{\hat H}}}^{{\rm{DL}},{N_{\rm f}}}}||_{\rm F}^2}}{{||{{\bf{H}}^{{\rm{DL}},{N_{\rm f}}}}||_{\rm F}^2}} = 0,
\label{eq38}
\end{equation}
under the condition of perfect UL channel estimation, where ${{{\bf{H}}^{{\rm{DL}},{N_{\rm f}}}}}$ and ${{{{\bf{\hat H}}}^{{\rm{DL}},{N_{\rm f}}}}}$ are the true and predicted $N_{\rm f}$-frame DL channel matrices, respectively.
\end{theorem}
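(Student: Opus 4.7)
The plan is to decompose the end-to-end DL prediction error into contributions from each stage of the SBEE pipeline---Slepian modeling of the UL channel samples, DLP fitting of the resulting Slepian coefficients, DLP extrapolation to the DL window, and the final Slepian synthesis of the predicted channels---and to show that each contribution vanishes in the stated asymptotic regime. Because perfect UL channel estimation is assumed and $\sigma^2\to 0$, the input $\hat{\mathbf{H}}^{\mathrm{UL}}$ to (23) coincides with the true UL channel matrix $\mathbf{H}^{\mathrm{UL}}$, so the AWGN and BEM modeling terms drop out and only the Slepian-modeling and DLP-extrapolation residuals remain. \textbf{Lemma 1} is still invoked implicitly, as it certifies that the very same limit $M\to\infty$ is the regime in which the upstream BEM stage (whose error would otherwise corrupt $\hat{\mathbf{H}}^{\mathrm{UL}}$) is consistent.

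First I would bound the Slepian representation error. Under the Jakes' spectrum each time-domain path is strictly bandlimited in the Doppler domain to $[-f_{\max},f_{\max}]$, and the columns of $\mathbf{B}_{\mathrm{SP}}$ are, by construction, the first $Q_{\mathrm{SP}}$ discrete prolate spheroidal sequences of length $MN$ that maximally concentrate energy in this band. A classical result (see \cite{Zemen2005}) shows that once $Q_{\mathrm{SP}} \ge \lceil 2 f_{\max} MN T_s \rceil + 1$, the relative energy of $\mathbf{W}_{\mathrm{SP}}^{\mathrm{UL}}$ in (23) tends to zero as $M\to\infty$, so that the coefficient estimate $\hat{\mathbf{C}}_{\mathrm{SP}}^{\mathrm{UL}}$ of (26) converges to the true Slepian coefficients of the underlying channel.

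Next I would show that the DLP-based extrapolation stage is asymptotically exact. Each entry of $\bar{\mathbf{C}}_{\mathrm{SP}}$ is the projection, at frame index $n_t$, of the channel onto a strictly bandlimited Slepian sequence; the resulting sequence in $n_t$ is therefore a smooth, slowly-varying function on $[-1,1]$ after the linear reindexing $t_n = 2(n-1)/(N_t-1)-1$. Since the DLPs $\{\varphi_q\}$ form a complete orthogonal family on $[-1,1]$, the Weierstrass approximation theorem makes the DLP fitting residual $\mathbf{W}_{\mathrm{DLP}}$ in (29) vanish as $Q_{\mathrm{DLP}}$ grows, and by analyticity this bound propagates to the one-step extrapolation in (32). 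Feeding the asymptotically exact Slepian coefficients back through the synthesis (36) then yields (38).

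The main obstacle will be controlling the DLP extrapolation error uniformly across the entire horizon $N_f$, because polynomial extrapolation outside the fitting window is generically ill-conditioned. Two structural ingredients of SBEE must be leveraged to tame this: the rescaling (31) redefines the DLP basis on the expanded interval after every iteration, so that the algorithm only ever extrapolates by a single step $\Delta$ beyond the current record; and the SG smoothing in (34) damps the high-order modes of the Slepian coefficient sequence and therefore suppresses the noise-like amplification that would otherwise arise through the least-squares refit in (35). Formalizing these as a one-step contractive recursion whose per-step error is of the order of the (vanishing) Slepian residual, and then summing the geometric bound over the $N_f/\Delta$ iterations, is the technical crux; once that is in place, combining it with the Slepian bound above delivers (38).
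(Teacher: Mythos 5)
Your decomposition of the SBEE pipeline (Slepian modeling, DLP fitting, extrapolation, synthesis) parallels the structure of the paper's Appendix C, but the two load-bearing steps are argued in a way that does not match the theorem's asymptotic regime and leaves a genuine gap. The theorem takes $\sigma^2 \to 0$ and $M \to \infty$ with the basis orders $Q_{\mathrm{SP}}$ and $Q_{\mathrm{DLP}}$ \emph{fixed} (they are system parameters, set to $5$ in the simulations). Your DLP step invokes completeness of the Legendre family and Weierstrass approximation, which makes the residual $\mathbf{W}_{\mathrm{DLP}}$ vanish only as $Q_{\mathrm{DLP}} \to \infty$ --- a limit the statement does not take; similarly, the classical DPSS concentration result gives a small-but-nonzero Slepian residual for fixed $Q_{\mathrm{SP}}$ exceeding the time-bandwidth product, not a residual that tends to zero. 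The paper obtains exact vanishing by a different mechanism, the same degenerate-correlation limit used in Lemma 1: as $M \to \infty$, $T_s = 1/(M\Delta f) \to 0$ and the Jakes autocorrelation matrix satisfies $\bar{\mathbf{R}} \to \mathbf{1}_{MN \times MN}$, which in (52) kills the relative Slepian modeling error of both the UL and DL channels at fixed $Q_{\mathrm{SP}}$, and moreover forces $\|\mathbf{c}_{\mathrm{SP}}^{\mathrm{UL}} - \mathbf{c}_{\mathrm{SP}}^{\mathrm{DL}}\|^2 \to 0$, i.e., the Slepian coefficient trajectory across frames becomes \emph{constant} in the limit.

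That limiting constancy is also how the paper disposes of the difficulty you correctly identify as the technical crux --- uniform control of iterated polynomial extrapolation over the horizon $N_{\rm f}$ --- and it is precisely the step your proposal asserts without proving. The claimed ``one-step contractive recursion'' would not hold as stated: least-squares polynomial extrapolation outside the fitting window is generically expansive, the interval rescaling in (31) does not render the update a contraction, and the SG smoothing in (34) is a fixed linear filter with no contraction guarantee, so summing a geometric bound over $N_{\rm f}/\Delta$ iterations has no foundation. The paper sidesteps error propagation entirely: since the limiting coefficient sequence is constant, a zero-order DLP fits it exactly, and the proof takes $p = 1$ with $\Delta = N_{\rm f}$ (a single extrapolation, eqs.~(53)--(54)), after which orthonormality of $\tilde{\mathbf{B}}_{\mathrm{SP}}$ in (55) transfers coefficient convergence directly to channel convergence. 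Without either the degenerate-limit observation or a worked-out replacement for the contraction claim, your argument does not close.
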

\begin{proof}
{The proof can be found in Appendix C.$\hfill\blacksquare$}
\end{proof}
\emph{Remarks:} 
According to CS theory \cite{Tropp2007}, $K$-sparse signals can be accurately recovered from $G > K{\rm{log}}(L{\rm{/}}K)$ linear measurements by using greedy algorithm. This reveals that the supports of non-zero UL channels can be accurately recovered by using the proposed VBL-SOMP algorithm when the number of pilots is large enough. Based on \textbf{Lemma 1}, as noise power ${\sigma ^2} \to 0$ and $M \to \infty $, the condition for perfect UL channel estimation can be achieved. 

\vspace{-0pt}
\subsection{Complexity Analysis}
The symbolic and numerical complexity of UL channel estimation and DL channel prediction are analyzed.

\emph{1) Complexity of BEM UL channel estimation.}
The complexity of BEM UL channel estimation mainly lies in VBL-SOMP algorithm, SG channel smoothing and data detection. In \textbf{Algorithm 1}, the complexity of computing block correlation in step 3 is $O(G{N_{\rm r}}{N_{\rm u}}L{Q_{\rm{s}}}Q({N_{\rm u}}(K - {K_{\rm C}}) + {K_{\rm C}}))$. The complexity of recovering the common blocks in step 7 is $O({({N_{\rm u}}{Q_{\rm s}}{K_{\rm C}})^3} + 2G{N_{\rm r}}{({N_{\rm u}}{K_{\rm C}}{Q_{\rm s}})^2} + G{N_{\rm r}}{N_{\rm u}}L{Q_{\rm s}}Q)$. The complexity of recovering individual blocks in step 13 is $O({({N_{\rm u}}{Q_{\rm{s}}}(K - {K_{\rm C}}))^3} + 2G{N_{\rm r}}N_{\rm u}^4Q_{{\rm{s}}}^2{(K - {K_{\rm C}})^2} + G{N_{\rm r}}{N_{\rm u}}L{Q_{\rm{s}}}Q)$. In the SG channel smoothing process, the complexity of calculating the polynomial coefficients in (19) is $O(MN({({Q_{{\rm sg}}} + 1)^3} + 2{({Q_{{\rm sg}}} + 1)^2}(2{N_{{\rm{sg}}}} + 1)))$. The complexity of calculating (20) is $O(MN({Q_{{\rm{sg}}}} + 1)(2{N_{{\rm{sg}}}} + 1){N_{\rm u}}{Q_{\rm{s}}}K)$. For data detection, the complexity of calculating ${\bf{\Xi }}$ is $O({N_{\rm d}}{N_{\rm r}}{N^2}N_{\rm u}^2{K^2})$, and the complexity of CG algorithm is $O({I_{\rm CG}}N_{\rm u}^2N_{\rm d}^2)$, where ${I_{\rm CG}}$ is the number of iterations of CG algorithm. Define the maximum number of iterations for BEM UL channel estimation as $I_{\rm max}$. Note that ${K_{\rm C}^3},K^3,{Q_{{\rm{s}}}^3},Q,{N_{\rm u}^3},{Q_{{{\rm{sg}}}}^3}$ and ${N_{\rm sg}}$ is much smaller than $MN$, $N_{\rm r}$ and $N_{\rm d}$. The total complexity of UL channel estimation can be summarized as $O({I_{\max }}(2G{N_{\rm r}}N_{\rm u}^4Q_{\rm s}^2{(K - {K_{\rm C}})^2} + G{N_{\rm r}}{N_{\rm u}^2}L{Q_{\rm s}}QK + MNQ_{\rm sg}^3 + {N_{\rm d}}{N_{\rm r}}{N^2}N_{\rm u}^2{K^2} + {I_{\rm {CG}}}N_{\rm u}^2N_{\rm d}^2))$.

\emph{ 2) Complexity of DL channel prediction.} 
The complexity of SBEE channel prediction mainly lies in calculations of (26), (34), (35) and (36). The complexity of calculating UL Slepian coefficients in (26) is $O( MN{N_{\rm r}}{N_{\rm u}}{N_{\rm t}}K)$. The complexity of Slepian coefficient smoothing in step (34) is $O(({N_{\rm t}} + {N_{\rm f}})({({Q_{{\rm sg}}} + 1)^3} + 2{({Q_{{\rm sg}}} + 1)^2}(2{N_{{\rm sg}}} + 1) + ({Q_{{\rm sg}}} + 1)(2{N_{{\rm sg}}} + 1){N_{\rm r}}{N_{\rm u}}K{Q_{{\rm{SP}}}}))$. The complexity of updating DL Slepian coefficients in step (35) is $O(Q_{{\rm{DLP}}}^3 + 2{N_{\rm t}}Q_{{\rm{DLP}}}^2 + {N_{\rm t}}{N_{\rm r}}{N_{\rm u}}K{Q_{{\rm{SP}}}}{Q_{{\rm{DLP}}}})$. The complexity of updating DL channel in (36) is $O(MN{N_{\rm r}}{N_{\rm u}}K{N_{\rm f}}{Q_{{\rm{SP}}}})$. Note that $Q_{{\rm{SP}}}^3$ and $Q_{{\rm{DLP}}}^3$ is much smaller than $MN$. Thus, the DL channel updation dominates the complexity of SBEE channel predictor. The total complexity of DL channel prediction can be summarized as $O(MN{N_{\rm r}}{N_{\rm u}}K{N_{\rm f}}{Q_{{\rm{SP}}}})$.

The symbolic and numerical complexity of the proposed BEM UL channel estimator and SBEE DL channel predictor are shown in Table I. The existing SOMP based channel estimator \cite{Shen2019} and vectorized Prony channel predictor \cite{Prony2020} are selected for complexity comparison. Numerical complexity analysis shows that the proposed UL channel estimator achieves 20\% complexity reduction compared to existing SOMP based channel estimator \cite{Shen2019}, and the proposed SBEE channel predictor achieves up to 70\% complexity reduction compared to existing vector Prony channel predictor \cite{Prony2020}.

\begin{table*}
\centering
\renewcommand{\arraystretch}{1.2}
\scriptsize
\caption{Symbolic and Numerical Complexity of UL Channel Estimation and DL Channel Prediction. $G=32$, $M=128$, $N=8$, $N_{\rm r}=64$, $N_{\rm u}=2$, $Q_{\rm s}=32$, $L=64$, $K=4$, $Q_{\rm {SP}}=5$, $N_{\rm{sg}}=5$, $Q=3$, $I_{\rm {CG}}=20$, $I_{\rm{max}}=4$, $N_{\rm{vp}}=5$ is the order of vector Prony predictor \cite{Prony2020}.}
\begin{tabular}{|m{1.8cm}<{\centering}|m{3.8cm}<{\centering}|m{1.3cm}<{\centering}|m{3.8cm}<{\centering}|m{3.6cm}<{\centering}|} 
\hline
\multirow{2}{*}{Schemes } & \multicolumn{3}{l|}{\qquad\qquad\qquad\qquad Symbolic complexity of UL channel estimation (per iteration)}                                                              &\multirow{2}{*}{Normalized numerical complexity } \\ \cline{2-4} 
                 & \multicolumn{1}{l|}{\quad\quad BEM coefficient estimation  } & \multicolumn{1}{l|}{VSG smoothing}                  &Data detection                   &  \\ \hline
                  SOMP \cite{Shen2019}& \multicolumn{1}{l|}{$O(2G{N_{\rm r}}{N_{\rm u}}{Q_{\rm{s}}}({K^2}{N_{\rm u}}{Q_{\rm{s}}} + LQ))$} & \multicolumn{1}{l|}{\multirow{2}{*}{$O(MNQ_{{\rm{sg}}}^3)$}} & \multirow{2}{*}{$O({N_{\rm d}}{N_{\rm r}}{N^2}N_{\rm u}^2{K^2}+ {I_{\rm {CG}}}N_{\rm u}^2N_{\rm d}^2)$} & 1.2 \\ \cline{1-2} \cline{5-5} 
                  VBL-SOMP& \multicolumn{1}{l|}{$O(2G{N_{\rm r}}N_{\rm u}^4{Q_{\rm s}^2}(K-K_{\rm C})^2)$} & \multicolumn{1}{l|}{}                 &                   &1  \\ \hline \hline
                  Schemes& \multicolumn{3}{l|}{\qquad\qquad\qquad \qquad \qquad  Symbolic complexity of DL channel prediction}                            &Normalized numerical complexity  \\ \hline
 Vector Prony \cite{Prony2020}& \multicolumn{3}{l|}{\qquad\qquad\qquad \qquad \qquad$O(N_{{\rm{vp}}}^2MN{N_{\rm r}}{N_{\rm u}}K + {N_{{\rm{vp}}}}MN{N_{\rm r}}{N_{\rm u}}{N_{\rm f}}K)$}                            &3.4  \\ \hline
 SBEE & \multicolumn{3}{l|}{\qquad\qquad\qquad \qquad \qquad\qquad\qquad\qquad$O(MN{N_{\rm r}}{N_{\rm u}}K{N_{\rm f}}{Q_{{\rm{SP}}}} )$}                            &1  \\ \hline
\end{tabular}
\end{table*}
\vspace{-0pt}

\begin{table}[htbp]
\renewcommand{\arraystretch}{1.2}
\centering
\caption{Simulation Parameters}
\begin{tabular}{|c|c|}
\hline
\textbf{System parameter} & \textbf{Value} \\
\hline
Carrier frequency ($f_{\rm c}$) & 3 GHz \\
\hline
Subcarrier spacing ($\Delta f$) & 30 KHz \\
\hline
Bandwidth ($B_{\rm w}$) & 3.84 MHz \\
\hline
Number of delay bins ($M$) & 128 \\
\hline
Number of Doppler bins ($N$) & 1, 4, 8, 16 \\
\hline
Number of BS antennas ($N_{\rm r}$)& 64 \\
\hline
Number of users ($N_{\rm u}$)& 2 \\
\hline
The total number of channel paths ($L$)& 64 \\
\hline
Number of non-zero paths ($K$)& 4 \\
\hline
Number of commom paths ($K_{\rm C}$)& 1 \\
\hline
Order of CE-BEM ($Q$)& 3 \\
\hline
Number of Slepian sequences ($Q_{\rm{SP}}$)& 5 \\
\hline
Order of DLP ($Q_{\rm{DLP}}$)& 5 \\
\hline
Order of SG smoother ($Q_{\rm{sg}}$)& 5 \\
\hline
Number of samples used in SG smoother ($N_{\rm{sg}}$)& 5 \\
\hline
Number of non-zero pilots per frame ($G$)& 32 \\
\hline
Modulation scheme & QPSK \\
\hline
Channel model & 5G TDL-B \cite{5g-channel} \\
\hline
\end{tabular}
\label{tab3}
\end{table}

\vspace{-5pt}
\section{Simulation Results}

In this section, the performance of the proposed BEM UL channel estimator and SBEE DL channel predictor are verified by extensive Monte Carlo simulation results. 

\vspace{-5pt}
\subsection{Simulation Setup}
The simulation parameters are set as follows. The carrier frequency is set to 3 GHz with subcarrier spacing of 30 KHz. Each OTFS frame includes 128 delay bins and 8 Doppler bins. The number of BS antennas and users are $N_{\rm r}=64$ and $N_{\rm u}=2$, respectively. The 5G TDL-B channel model \cite{5g-channel} with channel length $L=64$ and Jakes' Doppler spectrum are used to model doubly selective channels. The number of nonzero paths is $K=4$.
The orders of CE-BEM, Slepian sequences, DLP and SG channel smoother are $Q=3$, $Q_{\rm{SP}}=5$, $Q_{\rm{DLP}}=5$, and $Q_{\rm{sg}}=5$, respectively. In all figures, except for Fig. 11, the number of non-zero pilots in each OTFS frame is fixed at $G=32$, resulting in a pilot overhead of 16\%. quadrature phase shift keying (QPSK) modulation technique is adopted. The values of the key system parameters are summarized in Table III.

\begin{figure}[htbp]
\centering	
\begin{minipage}[t]{0.46\textwidth}
	\centering
	\includegraphics[width=1\textwidth]{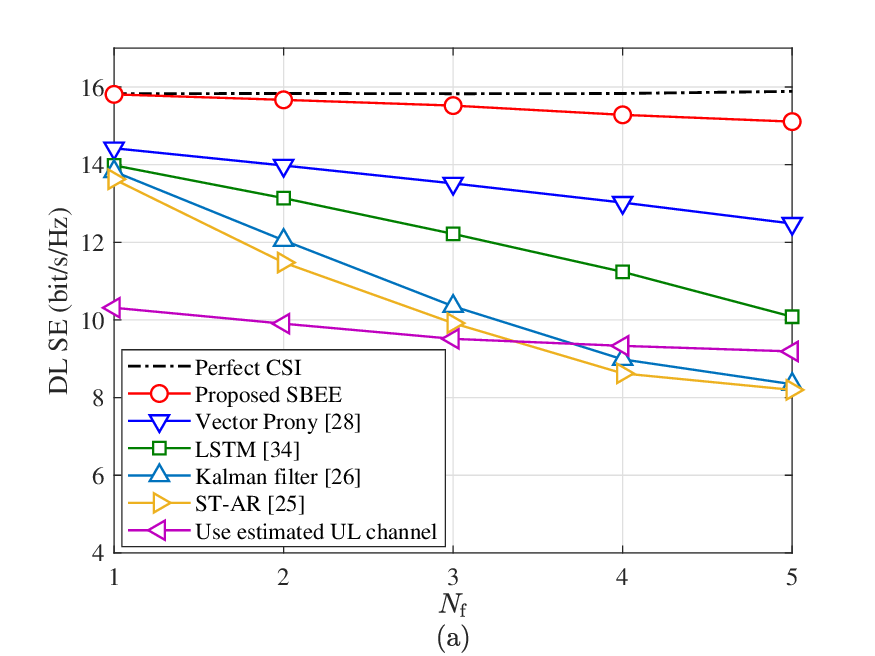}
		\vspace{-10pt} 
\end{minipage}
\begin{minipage}[t]{0.46\textwidth}
	\centering
	\includegraphics[width=1\textwidth]{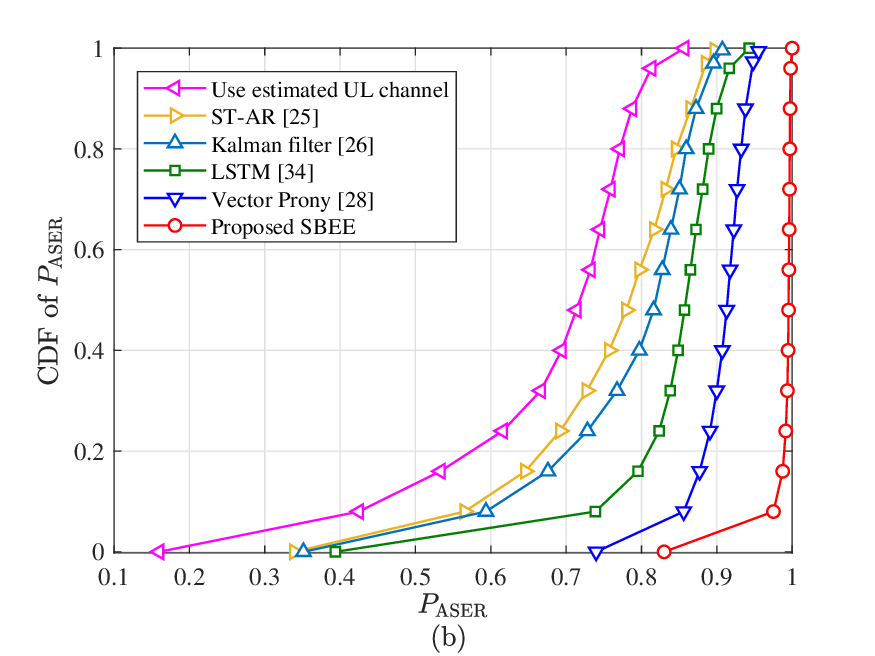}
	\vspace{-0pt} 
\end{minipage}
\vspace{-5pt}
\caption{(a) DL SE vs. the number of predicted OTFS frames $N_{\rm f}$, and (b) CDF of DL ASER $P_{\rm{ASER}}$ with $N_{\rm f}=2$, with $\rm{SNR} = 15$ dB, $N_{\rm r}=64$, $Q_{\rm s}=N_{\rm r}$, $N_{\rm u}=2$ and $v= 120$ km/h.\vspace{-5pt}}	
\label{fig8}
\end{figure}

The SOMP channel estimator \cite{Shen2019}, BSOMP channel estimator \cite{Gong2017}, spatio-temporal autoregression \mbox{(ST-AR)} predictor \cite{Wu2021}, Kalman filter based predictor \cite{Kalman2000}, vector Prony predictor \cite{Prony2020} 
and LSTM predictor \cite{Mattu2022} are selected as competitors. The genie-aided LS algorithm is also provided as a baseline scheme that uses LS algorithm to estimate  channel with the perfect knowledge of non-zero channel path indices and common path indices. For all channel predictors, the number of UL channel frames use for training is set to $N_{\rm t}=5$. 
The order of ST-AR and vector Prony predictors are also set as $Q_{\rm{DLP}}$ for a fair comparison. The LSTM predictor consists of 1 LSTM layer with 100 hidden units and one fully connected layer \cite{Mattu2022}.  
The logarithmic NMSE of channel estimation and channel prediction are defined as ${\rm{NMS}}{{\rm{E}}_{{\rm{CE}}}} = 10{{\log}_{10}}\mathbb{E} \left\{ {\frac{{\left\| {{{\bf{H}}^{{\rm{UL}}}} - {{{\bf{\hat H}}}^{{\rm{UL}}}}} \right\|_{\rm F}^2}}{{\left\| {{{\bf{H}}^{{\rm{UL}}}}} \right\|_{\rm F}^2}}} \right\}$ and ${\rm{NMS}}{{\rm{E}}_{{\rm{CP}}}} = 10{{\log}_{10}} \mathbb{E}\left\{ {\frac{{\left\| {{{\bf{H}}^{{\rm{DL}}}} - {{{\bf{\hat H}}}^{{{{\rm{DL}}}}}}} \right\|_{\rm F}^2}}{{\left\| {{{\bf{H}}^{{\rm{DL}}}}} \right\|_{\rm F}^2}}} \right\}$, respectively, where ${{{\bf{H}}^{{\rm{UL}}}}}$ and ${{{\bf{H}}^{{\rm{DL}}}}}$ are the true UL and DL channel matrices, ${{{{\bf{\hat H}}}^{{\rm{UL}}}}}$ and ${{{{\bf{\hat H}}}^{{\rm{DL}}}}}$ are the estimated UL and DL channel matrices.
The DL SE is calculated as \cite{Prony2020}
\begin{equation}
{R_{\rm SE}} = \sum\limits_{{n_{\rm u}} = 1}^{{N_{\rm u}}} {\mathbb{E}\left\{ {{{\log}_2} \left( {1 + \frac{{||{\bf{\bar h}}_{{n_{\rm u}}}^{\rm T}{{\bf{w}}_{{n_{\rm u}}}}||_2^2}}{{\sum\nolimits_{n \ne {n_{\rm u}}}^{{N_{\rm u}}} {||{\bf{\bar h}}_n^{\rm T}{{\bf{w}}_n}||_2^2}  + \sigma _{{n}}^2}}} \right)} \right\}},
\label{eq39}
\end{equation}
where ${{{\bf{\bar h}}}_{{n_{\rm u}}}}$ is the predicted DL channel, ${{{\bf{w}}_{{n_{\rm u}}}}}$ is the zero-forcing precoding vector obtained using the predicted DL channel \cite{Jiang2022}. 
The average SE ratio (ASER) is used as the performance evaluation criterion of DL channel predictors, which is defined as
\begin{equation}
{P_{\rm{ASER}}} = \mathbb{E}{{\{ {{\hat R}_{{\rm{SE}}}}\} } \mathord{\left/
 {\vphantom {{\{ {{\hat R}_{{\rm{SE}}}}\} } {E\{ {R_{{\rm{SE}}}}\} }}} \right.
 \kern-\nulldelimiterspace} {\mathbb{E}\{ {R_{{\rm{SE}}}}\} }},
\label{eq40}
\end{equation}
where ${\mathbb{E}\{ {{\hat R}_{{\rm{SE}}}}\} }$ and ${\mathbb{E}\{ {R_{{\rm{SE}}}}\} }$ denote the average SE obtained by using the predicted DL channels and the perfect DL channels, respectively.

\vspace{-5pt}
\subsection{DL Channel Prediction Performance}
In this subsection, we investigate the NMSE performance and DL SE of the proposed SBEE channel predictor at different SNRs and velocities.

\begin{figure}[htbp]
\centering	
\begin{minipage}[t]{0.46\textwidth}
	\centering
	\includegraphics[width=1\textwidth]{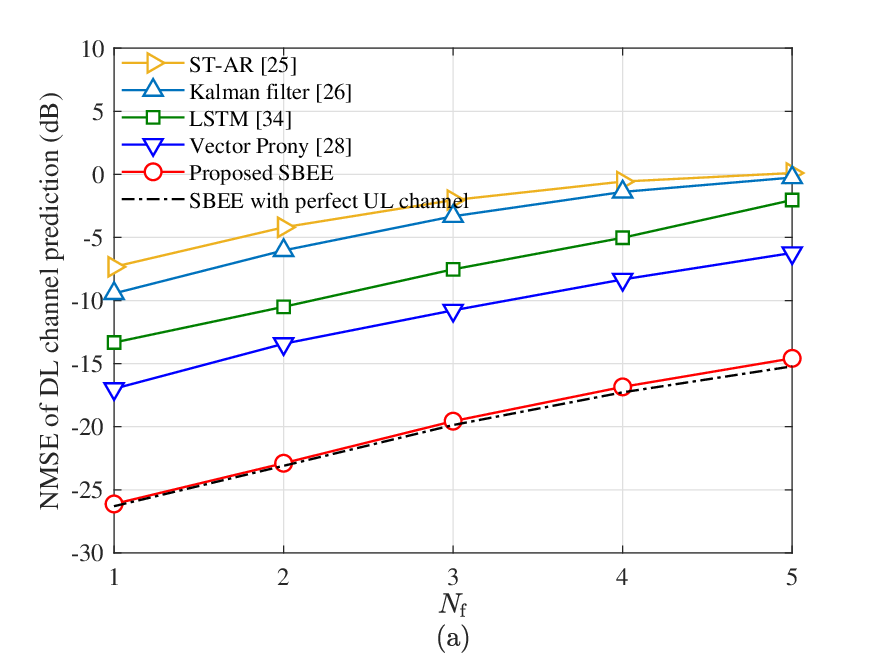}
		\vspace{-10pt} 
\end{minipage}
\begin{minipage}[t]{0.46\textwidth}
	\centering
	\includegraphics[width=1\textwidth]{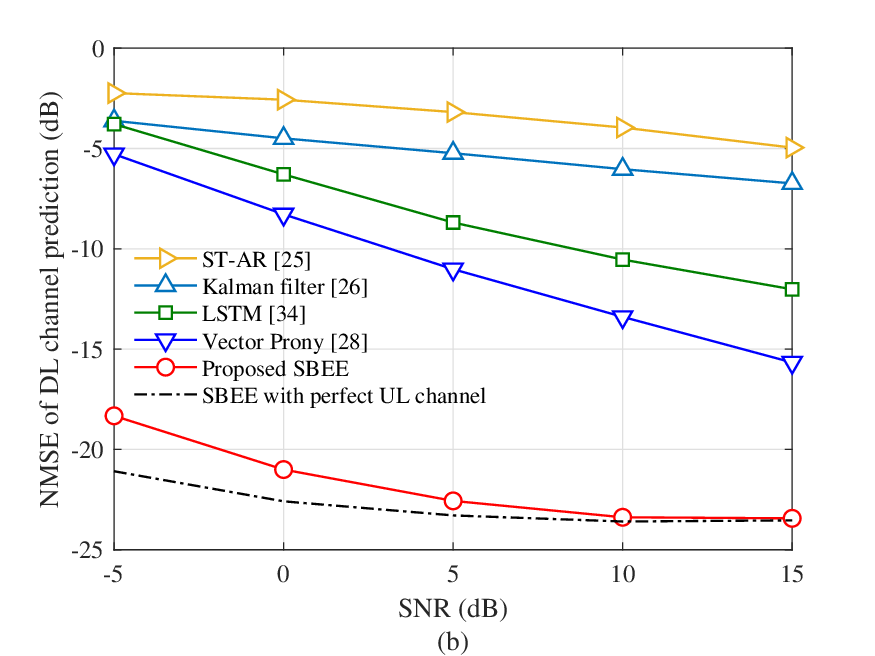}
	\vspace{-0pt} 
\end{minipage}
\vspace{-5pt}
\caption{NMSE of DL channel prediction as a function of (a) $N_{\rm f}$ with $\rm{SNR}=10$ dB, and (b) SNR with $N_{\rm f}=2$. \vspace{-10pt}}
\label{fig9}
\end{figure}

Fig. 8 a) shows the DL SE achieved by the proposed SBEE channel predictor and existing ST-AR predictor \cite{Wu2021}, Kalman filter based predictor \cite{Kalman2000}, vector Prony predictor \cite{Prony2020} 
and LSTM predictor \cite{Mattu2022}. The upper bound of DL SE is achieved by the scheme with perfect CSI available. We can observe that the proposed SBEE predictor could approach the upper bound of DL SE at different prediction frame length. Even at $N_{\rm f} = 5$, the SBEE predictor can achieve nearly 94\% DL SE of the upper bound of SE. Existing channel prediction schemes in \cite{Wu2021}, \cite{Kalman2000}, \cite{Prony2020} and \cite{Mattu2022} lead to large SE loss, especially when the number of predicted OTFS frames is large. Fig. 8 b) presents the cumulative distribution function (CDF) of the DL SER corresponding to different channel predictors. It can be seen that the DL SER achieved by the proposed SBEE predictor is very close to 1 with high probability. This means that the SBEE predictor has high prediction accuracy. The superior performance of proposed SBEE channel predictor lies in its utilization of global temporal correlation between the UL channel estimates and the subsequent DL channel predicts in each iteration. In contrast, the existing methods in \cite{Wu2021}, \cite{Kalman2000}, \cite{Prony2020}, and \cite{Mattu2022} could not make use of the already obtained DL channel predicts to enhance the subsequent channel prediction.

\begin{figure}[htbp]
\centerline{\includegraphics[width=0.46\textwidth]{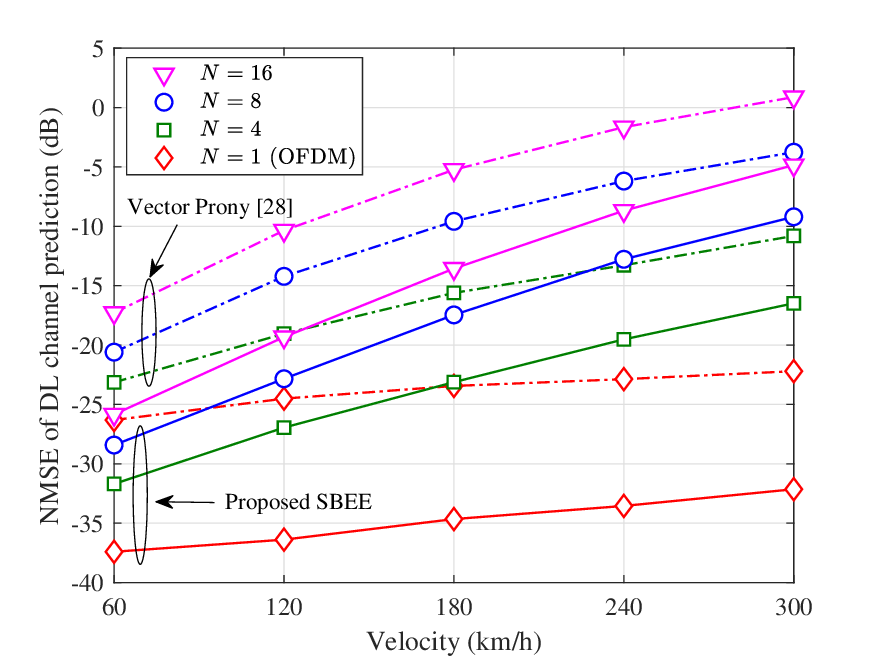}}
\caption{Impact of velocity of users on NMSE of DL channel prediction with $Q_{\rm{s}}=N_{\rm r}$, $N_{\rm f}=2$ and $\rm{SNR}=15$ dB.\vspace{-0pt}}
\label{fig10}
\end{figure}
\vspace{-0pt}
\begin{figure}[htbp]
\centerline{\includegraphics[width=0.46\textwidth]{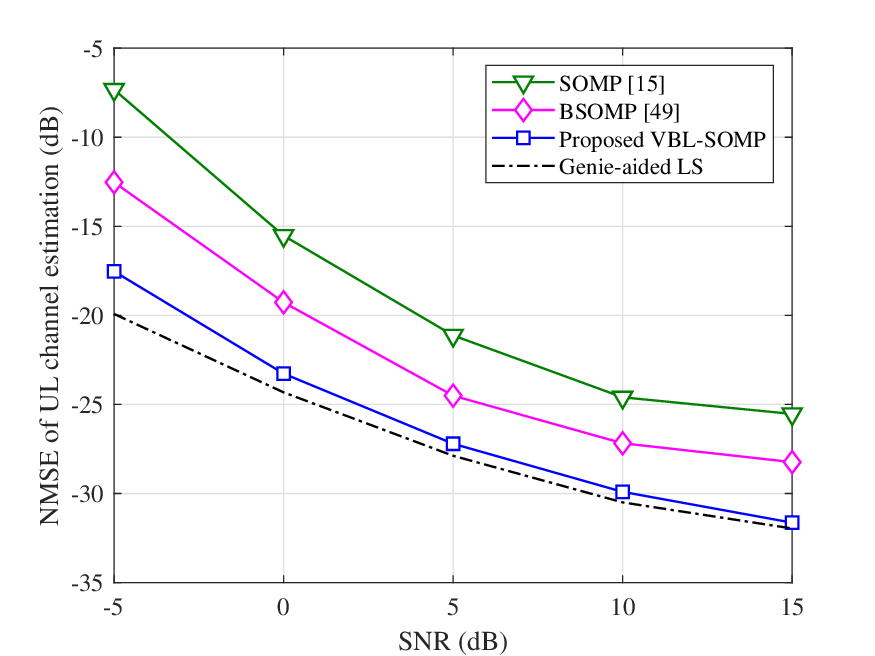}}
\caption{NMSE of UL channel estimation vs. SNR with $Q_{\rm s}=N_{\rm r}$ and $v=120$ km/h.\vspace{-0pt}}
\label{fig11}
\end{figure}

Fig. 9 investigates the NMSE performance of DL channel prediction of the proposed SBEE predictor under different lengths of predicted OTFS frames and different SNRs. The user's velocity and the order of SR-BEM are set to \mbox{$v= 120$ km/h} and $Q_{\rm s}=32$, respectively. From Figs. 9a) and b), it can be seen that the proposed SBEE channel predictor outperforms the ST-AR \cite{Wu2021}, Kalman filter \cite{Kalman2000}, vector Prony \cite{Prony2020} and LSTM predictors \cite{Mattu2022} in terms of NMSE of DL channel prediction, especially at low SNR regime. For example, at SNR = -5 dB and $N_{\rm f}=2$, the proposed SBEE channel predictor achieves about 13 dB gain in terms of NMSE compared to that of the existing vector Prony scheme \cite{Prony2020}. Moreover, the NMSE performance of the UL channel samples aided SBEE channel predictor is very close to that of the perfect UL channel aided one. This is mainly attributed to the accurate BEM iterative UL channel estimation. As $N_{\rm f}$ increases, the iterative process of SBEE predictor amplifies error propagation from UL channel estimation, leading to a widening performance gap compared to SBEE with perfect UL channel. However, as SNR increases, UL channel estimation becomes more accurate, reducing error propagation and narrowing the channel prediction performance gap. The performance of Kalman filter-based channel predictor \cite{Kalman2000} degrades when UL channel estimates rather than true UL channel is used to obtain the temporal autocorrelation function. Hence, it exhibits worse performance than the proposed scheme in Figs. 8 and 9.

\begin{figure}[htbp]
\centerline{\includegraphics[width=0.46\textwidth]{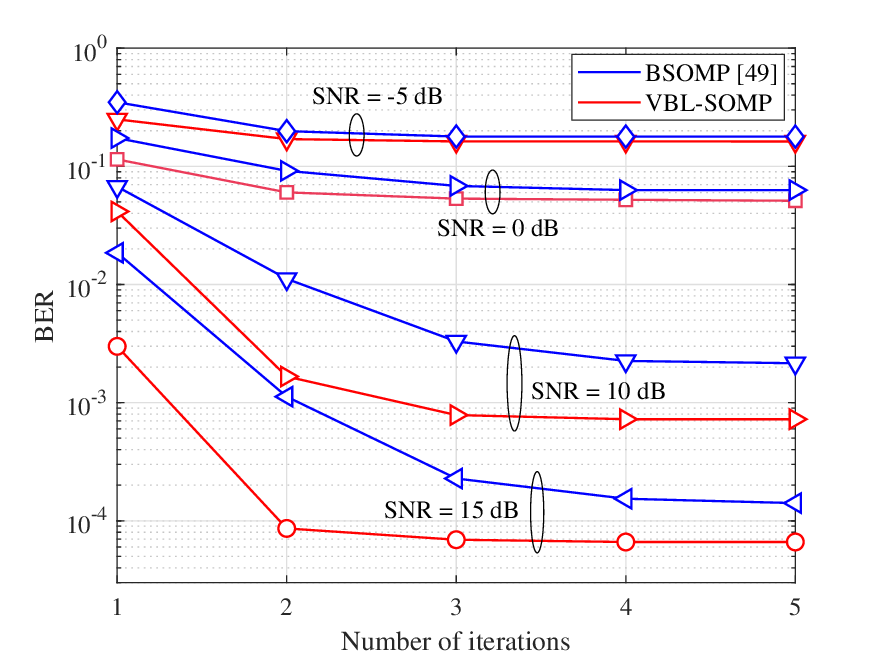}}
\caption{Convergence speed of the proposed channel estimator and existing BSOMP scheme \cite{Gong2017} with $Q_{\rm s}=N_{\rm r}$ and $v=120$ km/h.}
\vspace{-0pt}
\label{fig12}
\end{figure}
\begin{figure}[htbp]
\centerline{\includegraphics[width=0.46\textwidth]{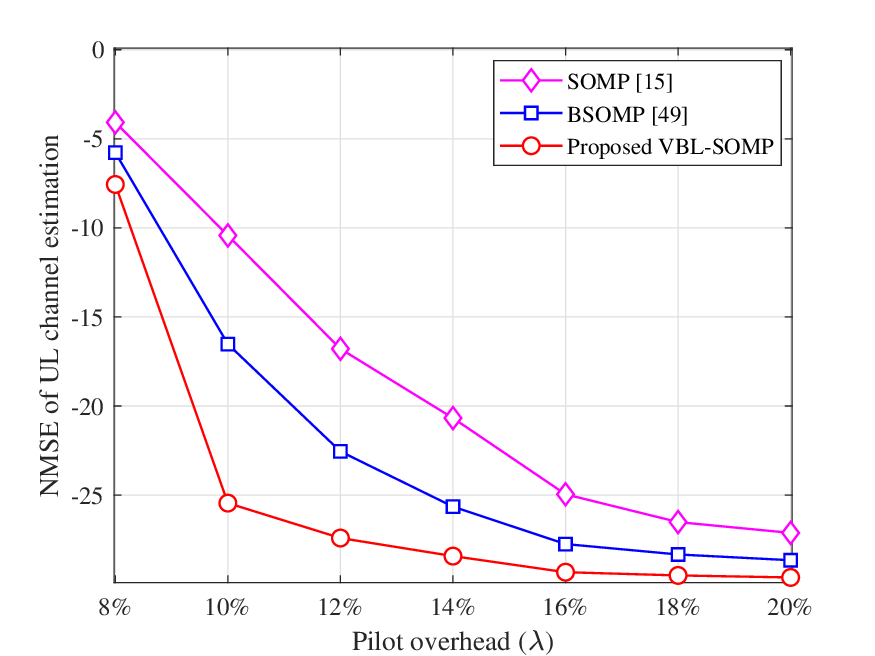}}
\caption{NMSE of UL channel estimation vs. pilot overhead with $Q_{\rm s}=N_{\rm r}$, $v= 120$ km/h and $\rm{SNR}=10$ dB.\vspace{-0pt}}
\label{fig13}
\end{figure}

Fig. 10 investigates the impact of velocity of users on NMSE performance of DL channel prediction for $N=$ 1, 4, 8 and 16. Note that when $N = 1$, the OTFS system turns out to be OFDM system. As the user speed increases, both the Prony predictor \cite{Prony2020} and the proposed SBEE predictor experience degraded channel prediction performance. This degradation stems from two factors: higher user speeds introduce greater modeling errors in the BEM representation of the UL channel, resulting in larger UL channel estimation errors; and increased mobility reduces channel temporal correlation, thereby degrading the performance of channel predictors. As $N$ increases, the NMSE performance of SBEE channel predictor and existing vector Prony predictor \cite{Prony2020} decreases. This is mainly because the larger the value of $N$, the longer the duration of OTFS frame, resulting in more channel coefficients to be predicted. This implies that channel prediction in OTFS systems is more challenging than that in OFDM systems. Note that the degraded DL channel prediction performance can be alleviated by enhancing UL channel estimation, \emph{e.g.}, increasing SNR, using more \mbox{pilots, etc.}

\begin{figure}[htbp]
\centerline{\includegraphics[width=0.46\textwidth]{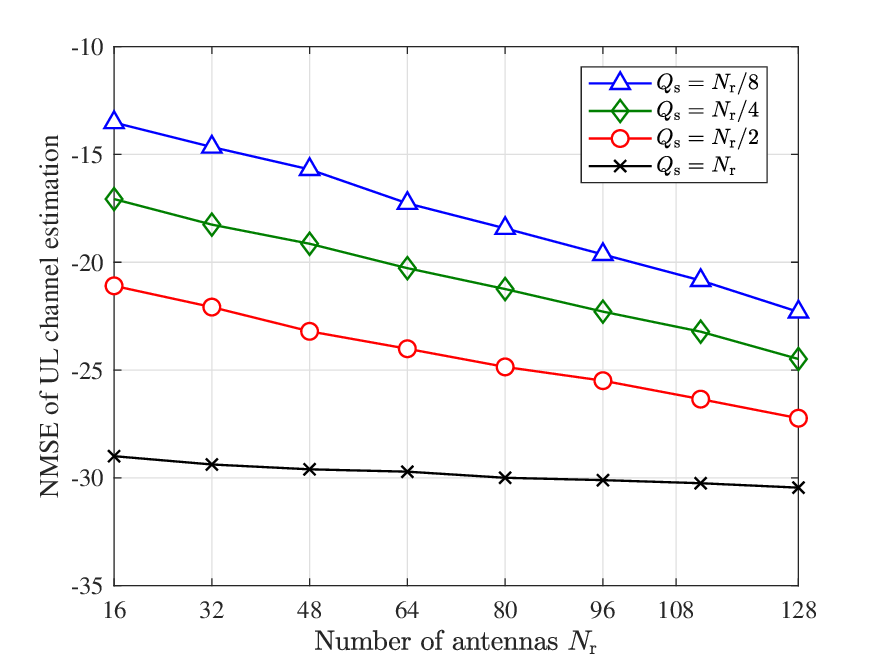}}
\caption{NMSE of UL channel estimation vs. number of BS antennas with $v= 120$ km/h and $\rm{SNR}=10$ dB.\vspace{-10pt}}
\label{fig14}
\end{figure}

\vspace{-0pt}
\subsection{UL Channel Estimation Performance}
In this subsection, the performance of the proposed BEM UL channel estimation scheme is verified by comparing with existing schemes in terms of convergence, pilot overhead, and NMSE performance.

Fig. 11 shows the UL channel estimation performance of the proposed BEM channel estimator, in comparison to the existing SOMP \cite{Shen2019} and BSOMP \cite{Gong2017} schemes. It can be seen that the proposed VBL-SOMP algorithm outperforms the existing SOMP \cite{Shen2019} and BSOMP \cite{Gong2017} bsaed schemes in terms of NMSE of channel estimation. In addition, the proposed VBL-SOMP algorithm approaches the NMSE achieved by the genie-aided LS algorithm. Unlike the existing \cite{Shen2019} and BSOMP \cite{Gong2017} based schemes, the VBL-SOMP algorithm can exploit the common block sparsity structure, thus enhancing the estimation accuracy of non-zero channel paths.


Fig. 12 investigates the convergence speed of the proposed BEM UL channel estimator in terms of BER. At low SNRs, the BERs corresponding to the BSOMP \cite{Gong2017} and the proposed VBL-SOMP channel estimation algorithm remains almost unchanged after 2 iterations. At higher SNRs, the BER corresponding to the proposed VBL-SOMP channel estimation algorithm remains constant after 3 iterations, which indicates that the proposed BEM UL channel estimator converges fast.


Fig. 13 presents the NMSE of channel estimation versus the pilot overhead for the proposed BEM UL channel estimation scheme and the comparators. The NMSE of the proposed BEM UL channel estimation scheme remains unchanged when the pilot overhead larger than 16\%, which means that the proposed scheme only needs 16\% pilot overhead to achieve accurate channel estimation. At ${\lambda _{\rm{p}}}=16\%$, the proposed BEM UL channel estimation scheme outperforms the existing SOMP \cite{Shen2019} and BSOMP \cite{Gong2017} based schemes in terms of NMSE of channel estimation. This indicates that the proposed BEM UL channel estimation scheme is able to provide good historical UL channel samples for subsequent DL channel prediction with very low pilot overhead.

Fig. 14 exhibits the NMSE of channel estimation of the proposed BEM UL channel estimator as a function of the number of BS antennas for $Q_{\rm s}=N_{\rm r}/8$, $Q_{\rm s}=N_{\rm r}/4$, $Q_{\rm s}=N_{\rm r}/2$ and $Q_{\rm s}=N_{\rm r}$. On the one hand, with the decrease of $Q_{\rm s}$, the performance of channel estimation becomes worse. On the other hand, as the number of BS antennas increases, the channel estimation accuracy improves. This is because the spatial resolution of the channel improves as the number of BS antennas increases, leading to more accurate BEM channel modeling and higher channel estimation accuracy.

\vspace{-0pt}
\section{Conclusions}
In this paper, an integrated scheme of UL channel estimation and DL channel prediction has been proposed for TDD massive MIMO-OTFS systems. The proposed iterative BEM channel estimator achieves accurate UL channel estimation with low pilot overhead and fast convergence speed. The proposed SBEE channel predictor enables accurate and long-term DL channel prediction by using a small number of estimated UL channel samples. The proposed SBEE channel predictor outperforms the existing AR \cite{Wu2021}, Kalman filter \cite{Kalman2000}, vector Prony \cite{Prony2020} and LSTM \cite{Mattu2022} channel predictors in terms of NMSE of DL channel prediction, and provide a close DL SE performance to the upper bound obtained by assuming perfect DL channel. Future work would study the extension of the proposed channel estimation and channel prediction schemes in the Terahertz band when quantization error, phase noise and inter-cell interference are considered.

\begin{appendices}
\vspace{-0pt}
\section{Derivation of (13)}

By substituting (12) and (2) into (1), the total received signal in terms of SR-BEM coefficients is given as 

\begin{equation}
{{\bf{y}}_{{\rm{DD}}}} = \sum\nolimits_{q = 0}^{Q - 1} {{{{\bf{\bar U}}}_q}{\bf{\Psi \bar D}}{{\bf{s}}_q}}  + {\bf{z}}.
\label{eq41}
\end{equation}

By multiplying both the left and right sides of (41) by ${\bf{\bar U}}_{\frac{{Q - 1}}{2}}^{\rm H}$, (41) can be expressed as 
\begin{equation}
\begin{aligned}
{\bf{\bar U}}_{\frac{{Q - 1}}{2}}^{\rm H}{{\bf{y}}_{{\rm{DD}}}} &= \sum\nolimits_{q' = 0}^{Q - 1} {{\bf{\bar U}}_{\frac{{Q - 1}}{2}}^{\rm H}{{{\bf{\bar U}}}_{q'}}{\bf{\Psi \bar D}}{{\bf{s}}_{q'}} + {\bf{z}}} \\
&\mathop  = \limits^{(b)} \sum\nolimits_{q' = 0}^{Q - 1} {({{\bf{I}}_{{N_{\rm r}}}} \otimes {{\bf{\Pi }}^{(q' - \frac{{Q - 1}}{2})}}){\bf{\Psi \bar D}}{{\bf{s}}_{q'}} + {\bf{z}}} \\
&\mathop  = \limits^{(c)} {\bf{\Psi \bar D}}{{\bf{s}}_q} + {\bf{\bar z}},
\end{aligned}
\label{eq42}
\end{equation}
where ${\bf{\bar z}} = \sum\limits_{q = 0,q \ne \frac{{Q - 1}}{2}}^{Q - 1} {({{\bf{I}}_{{N_{\rm r}}}} \otimes {{\bf{\Pi }}^{(q' - \frac{{Q - 1}}{2})}})} {\bf{\Psi \bar D}}{{\bf{s}}_q} + {\bf{z}}$, (b) holds because the following formula holds
\begin{equation}
\begin{aligned}
{\bf{\bar U}}_{\frac{{Q - 1}}{2}}^{\rm H}{{{\bf{\bar U}}}_{q'}} &= {({{\bf{I}}_{{N_{\rm r}}}} \otimes {{\bf{U}}_{\frac{{Q - 1}}{2}}})^{\rm H}}\left( {{{\bf{I}}_{{N_{\rm r}}}} \otimes \left( {{{\bf{U}}_{q'}}} \right)} \right)\\
 &= {{\bf{I}}_{{N_{\rm r}}}} \otimes ({{\bf{F}}_{MN}}{\mathop{\rm diag}\nolimits} ({\bf{b}}_{\frac{{Q - 1}}{2}}^*){{\bf{B}}_{\rm{t}}}{{\bf{B}}_{\rm{r}}}{\mathop{\rm diag}\nolimits} ({{\bf{b}}_{q'}}){\bf{F}}_{MN}^{\rm H})\\
 &= {{\bf{I}}_{{N_{\rm r}}}} \otimes {{\bf{\Pi }}^{(q' - \frac{{Q - 1}}{2})}},
\end{aligned}
\label{eq43}
\end{equation}
and (c) holds because ${{\bf{I}}_{{N_{\rm r}}}} \otimes {{\bf{\Pi }}^{(q' - \frac{{Q - 1}}{2})}} = {{\bf{I}}_{{N_{\rm r}}MN}}$ when ${q' = \frac{{Q - 1}}{2}}$. Define the sampling matrix corresponding to the $q$-th CE-BEM coefficient vector as ${{\bf{\Omega }}_q} = {{\bf{I}}_{{N_{\rm r}}}} \otimes {[{{\bf{I}}_{MN}}]_{{{\cal P}_q}}}$, By sampling equation (42), we can obtain
\begin{equation}
\begin{aligned}
{[{{{\bf{\tilde y}}}_{{\rm{DD}}}}]_{{{\cal P}_q}}} &= \sum\limits_{q' = 0}^{Q - 1} {{{\bf{\Omega }}_q}({{\bf{I}}_{{N_{\rm r}}}} \otimes {{\bf{\Pi }}^{(q' - \frac{{Q - 1}}{2})}})} {\bf{\Psi \bar D}}{{\bf{s}}_{q'}} + {\bf{z}}\\
&\mathop  = \limits^{(d)} {{\bf{\Psi }}_{{{\cal P}_{\frac{{Q - 1}}{2}}}}}{\bf{\bar D}}{{\bf{s}}_{q'}} + {\bf{z}},
\end{aligned}
\label{eq44}
\end{equation}
where ${{{\bf{\tilde y}}}_{{\rm{DD}}}} = {\bf{\bar U}}_{\frac{{Q - 1}}{2}}^{\rm H}{{\bf{y}}_{{\rm{DD}}}}$, ${[{{{\bf{\tilde y}}}_{{\rm{DD}}}}]_{{{\cal P}_q}}} = {{\bf{\Omega }}_q}{{{\bf{\tilde y}}}_{{\rm{DD}}}}$, and (d) holds because
\begin{equation}
{{\bf{\Omega }}_q}({{\bf{I}}_{{N_{\rm r}}}} \otimes {{\bf{\Pi }}^{(q' - \frac{{Q - 1}}{2})}}) = \left\{ {\begin{array}{*{20}{c}}
{{{\bf{\Omega }}_{{{\cal P}_{\frac{{Q - 1}}{2}}}}},{\rm{  }}q = q'}\\
{{{\bf{\Omega }}_{q - q' - \frac{{Q - 1}}{2}}},{\rm{  }}q \ne q'}
\end{array}} \right.
\label{eq45}
\end{equation}
Note that for $q \ne \frac{{Q - 1}}{2}$, ${{\bf{\Psi }}_{{{\cal P}_q}}} = {{\bf{0}}_{{N_{\rm r}}G}} \otimes {{\bf{1}}_{{N_{\rm u}}}}$. The $Q$ separated equations in (13) can be easily obtained from (44). 

\vspace{-0pt}

\section{Proof of Lemma 1}
To simplify the derivation, the modeling error using SR-BEM in equation (11) is equal to 0 by assuming $Q_{\rm s}=N_{\rm r}$.
Since the UL channels between all antenna pairs are all modeled with the CE-BEM, we omit the antenna index $n_{\rm r}$ and user index $n_{\rm u}$ to simplify the description. Define the non-zero UL channel between any antenna pair as $\{ {\bm{\hbar}}_k\} _{k = 1}^{K}$, according to (8), the modeling error for the $k$-th path can be expressed as:
\begin{equation}
{{\bf{v}}_k} = {{\bm{\hbar}}_k} - {\bf{B}}{{\bf{c}}_k} = ({{\bf{I}}_{MN}} - {\bf{B}}{{\bf{B}}^{H} }){{\bm{\hbar}}_k}.
\label{eq46}
\end{equation}

Note that ${{\bf{B}}^{H} }{\bf{B}} = {{\bf{I}}_{{Q_{{\rm{}}}}}}$. Consider all antenna pairs, the average error of UL channel model is given by
\begin{equation}
\begin{array}{l}
{\rm{MS}}{{\rm{E}}_{\bmod }} = \frac{1}{{\bar \omega }}\mathbb{E}\{ \bm{\hbar} _k^{\rm H}({{\bf{I}}_{MN}} - {\bf{B}}{{\bf{B}}^{\rm H}}){\bm{\hbar} _k}\} \\
 = \frac{1}{{\bar \omega }}{\rm{Tr}}\left[ {({{\bf{I}}_{MN}} - {\bf{B}}{{\bf{B}}^{\rm H}}){\bf{\bar R}}} \right],
\end{array}
\label{eq47}
\end{equation}
where $\bar \omega  = \frac{1}{{MNK{N_{\rm r}}{N_{\rm u}}{N_{\rm t}}}}$, $\bf{\bar R}$ is the sum of the autocorrelation matrices of the channels of all paths, whose $t$-th row and $ t'$-th column is given as ${[{\bf{\bar R}}]_{t,{t^\prime }}} = {J_0}\left( {2\pi {f_{\rm d}}{T_{\rm s}}\left( {t - {t^\prime }} \right)} \right)$ \cite{Zhang2022}.
As $M \to \infty $, the bandwidth tends to infinity, and the sampling period ${T_{\rm s}} \to 0$ and thus ${\bf{\bar R}} \to {{\bf{1}}_{MN \times MN}}$, one can obtain
\begin{equation}
\begin{array}{l}
\mathop {\lim }\limits_{M \to \infty } {\rm{MS}}{{\rm{E}}_{\bmod }} = \\
\frac{1}{{\bar \omega }}\mathop {\lim }\limits_{M \to \infty } (MN - \sum\limits_{i = 0}^{MN - 1} {\sum\limits_{j = 0}^{MN - 1} {{{{\bf{\bar b}}}_i}{\bf{\bar b}}_j^{\rm H}} } ) = 0,
\end{array}
\label{eq48}
\end{equation}
where ${{{{\bf{\bar b}}}_i}}$ is the $i$-th row of $\bf B$ and $\sum\nolimits_{i = 1}^{MN} {\sum\nolimits_{j = 1}^{MN} {{{{\bf{\bar b}}}_i}{\bf{\bar b}}_j^{\rm H}} }  = MN$ because $\bf B$ is a truncated orthogonal matrix.

Define ${\bar {\bm{\hbar}}}  = {[{({\bf{{ h}}}_{0,1,1}^{{\rm{UL}}})^{\rm H}}, \cdots ,{({\bf{{ h}}}_{{N_{\rm r}} - 1,{N_{\rm u}},K}^{{\rm{UL}}})^{\rm H}}]^{\rm H}} $ as the channel of any UL OTFS frame. Neglecting BEM modeling errors, ${\bar {\bm{\hbar}}}$ can be expressed as a function of SR-BEM coefficients ${\bar {\bm{\hbar}}} \buildrel \Delta \over = {\widetilde {\bf D}}{\bar {\bf{s}}}$,
where ${\bar {\bf{s}}} = {\mathop{\rm vec}\nolimits} ({\bf{S}})$ and ${\widetilde {\bf D}} = ({{\bf{I}}_{{N_{\rm u}}}} \otimes ({{\bf{I}}_{{N_{\rm r}}}} \otimes ({{\bf{I}}_K} \otimes {\bf{B}})))({{\bf{I}}_Q} \otimes {\bf{\bar D}}) \in {\mathbb{C}^{MN{N_{\rm r}}{N_{\rm u}}K \times {N_{\rm u}}KQ{N_{\rm r}}}}$. Note that ${{{\bf{\tilde D}}}^{\rm H}}{\bf{\tilde D}} = {{\bf{{ I}}}_{{N_{\rm r}}{N_{\rm u}}KQ}}$. The average MSE for estimating the channels of $N_{\rm t}$ UL OTFS frames can be calculated as
\begin{equation}
\begin{aligned}
{\rm{MS}}{{\rm{E}}_{{\rm{est}}}} &= \bar \omega \sum\nolimits_{{n_{\rm t}} = 1}^{{N_{\rm t}}} {\mathbb{E}\{ ||{\bm{\bar{\hbar}}} - {\bm{\hat{\bar{\hbar}}}}|{|^2}\} } \\
 &= \bar \omega \sum\nolimits_{{n_{\rm t}} = 1}^{{N_{\rm t}}} {\mathbb{E}\{ {{({\bf{\bar s}} - {\bf{\hat {\bar s}}})}^{\rm H}}{{{\bf{\tilde D}}}^{\rm H}}{\bf{\tilde D}}({\bf{\bar s}} - {\bf{\hat {\bar s}}})\} } \\
 &= \bar \omega \sum\nolimits_{{n_{\rm t}} = 1}^{{N_{\rm t}}} {\mathbb{E}\{ ||{\bf{\bar s}} - {\bf{\hat {\bar s}}}|{|^2}\} } ,
\end{aligned}
\label{eq49}
\end{equation}
where ${{\bf{\hat {\bar s}}}}$ is the estimated SR-BEM coefficient vector. As shown in \cite{Liu2022}, according to the vector estimation theory, we can obtain ${\mathbb E}\{ ||{\bf{\bar s}} - {\bf{\hat {\bar s}}}|{|^2}\}  = {\rm{Tr}}\{ {({{\bf{\Phi }}^{\rm{H}}}{\bf{\Phi }})^{ - 1}}\} $, where $\bf \Phi$ is the measurement matrix defined in (14). Given the number of antennas $N_{\rm r}$, the number of users $N_{\rm u}$ and the number of non-zero pilots $G$, ${\rm{Tr}}\{ {({{\bf{\Phi }}^{\rm{H}}}{\bf{\Phi }})^{ - 1}}\} $ is a constant, and the estimation error depends only on the noise power ${\sigma}^2$. The asymptotic MSE of channel estimation can be expressed as
 \begin{equation}
\mathop {\lim }\limits_{{\sigma ^2} \to 0} {\rm{MS}}{{\rm{E}}_{{\rm{est}}}} = \mathop {\lim }\limits_{{\sigma ^2} \to 0}\bar \omega\sum\nolimits_{{n_{\rm t}} = 1}^{{N_{\rm t}}} {\mathbb E}\{ ||{\bar {\bm{\hbar}}} -\hat {\bar {\bm{\hbar}}}|{|^2}\}  = 0.
 \label{eq50}
\end{equation}

According to CS theory \cite{Tropp2007}, $K$-sparse signals can be accurately recovered from $G > K{\rm{log}}(L{\rm{/}}K)$ linear measurements by using OMP-like algorithms. This reveals that accurate UL channel estimation can be achieved by using the proposed VBL-SOMP algorithm when the number of pilots is large enough. As shown in (47) and (49), as ${\sigma ^2} \to 0$ and $M \to \infty $, one can obtain error-free UL channels. 

\vspace{-10pt}
\section{Proof of Theorem 1}
Note that (23) can be vectorized as 
\begin{equation}
{{{\bf{\hat h}}}^{{\rm{UL}}}} = ({{\bf{I}}_{{N_{\rm r}}{N_{\rm u}}K}} \otimes {{{\bf{\bar B}}}_{{\rm{SP}}}}){{\bf{c}}_{{\rm{SP}}}^{\rm UL}} + {{\bf{w}}_{{\rm{SP}}}^{\rm UL}},
\label{eq51}
\end{equation}
where ${{{\bf{\hat h}}}^{{\rm{UL}}}} = {\mathop{\rm vec}\nolimits} ({{{\bf{\hat H}}}^{{\rm{UL}}}})$, ${{\bf{c}}_{{\rm{SP}}}^{\rm UL}} = {\mathop{\rm vec}\nolimits} ({{\bf{C}}_{{\rm{SP}}}^{\rm UL}})$ and ${{\bf{w}}_{{\rm{SP}}}^{\rm UL}} = {\mathop{\rm vec}\nolimits} ({{\bf{W}}_{{\rm{SP}}}^{\rm UL}})$. The asymptotic modeling error of ${{{\bf{\hat h}}}^{{\rm{UL}}}}$ can be expressed as
\begin{equation}
\begin{array}{l}
\mathop {\lim }\limits_{{\sigma ^2} \to 0,M \to \infty } \bar \omega \mathbb{E}\{ {({\bf{w}}_{{\rm{SP}}}^{{\rm{UL}}})^{\rm H}}{\bf{w}}_{{\rm{SP}}}^{{\rm{UL}}}\} \\
 = \mathop {\lim }\limits_{{\sigma ^2} \to 0,M \to \infty } \bar \omega \mathbb{E} \{ ||({{\bf{I}}_{MN{N_{\rm r}}{N_{\rm u}}K{N_{\rm t}}}} - {{{\bf{\tilde B}}}_{{\rm{SP}}}}{\bf{\tilde B}}_{{\rm{SP}}}^{\rm T}){{{\bf{\hat h}}}^{{\rm{UL}}}}|{|^2}\} \\
 = \mathop {\lim }\limits_{{\sigma ^2} \to 0,M \to \infty } (1 - \bar \omega {N_{\rm r}}{N_{\rm u}}K{N_{\rm t}}{\rm{Tr}}[{{\bf{B}}_{{\rm{SP}}}}{\bf{B}}_{{\rm{SP}}}^{\rm T}{\bf{\bar R}}]) = 0,
\end{array}
\label{eq52}
\end{equation}
where ${{{\bf{\tilde B}}}_{{\rm{SP}}}} = {{\bf{I}}_{{N_{\rm t}}{N_{\rm r}}{N_{\rm u}}K}} \otimes {{\bf{B}}_{{\rm{SP}}}}$. The $N_{\rm f}$-frame DL channel can be modeled by Slepian sequences as ${{\bf{H}}^{{\rm{DL}},{N_{\rm f}}}} = {{{\bf{\bar B}}}_{{\rm{SP}}}}{\bf{C}}_{{\rm{DPS}}}^{{\rm{DL}}} + {\bf{W}}_{{\rm{SP}}}^{{\rm{DL}}}$, where ${{\bf{H}}^{{\rm{DL}},{N_{\rm f}}}}$ is the true DL channel, ${\bf{C}}_{{\rm{SP}}}^{{\rm{DL}}}$ is the DL Slepian coefficient matrix and ${\bf{W}}_{{\rm{SP}}}^{{\rm{DL}}}$ is the DL Slepian modeling error. Similar to the analysis in (51), the asymptotic error of DL channel modeling is given as $\mathop {\lim }\limits_{M \to \infty } \mathbb{E}\{ {({\bf{w}}_{{\rm{SP}}}^{{\rm{DL}}})^{\rm H}}{\bf{w}}_{{\rm{SP}}}^{{\rm{DL}}}\}  = 0$, where ${\bf{w}}_{{\rm{SP}}}^{{\rm{DL}}} = {\mathop{\rm vec}\nolimits} ({\bf{W}}_{{\rm{SP}}}^{{\rm{DL}}})$. Since $\mathop {\lim }\limits_{M \to \infty } {({\bf{h}}_{{n_{\rm r}},{n_{\rm u}},k}^{{\rm{UL,}}{n_{\rm t}}})^{\rm H}}{\bf{h}}_{{n_{\rm r}},{n_{\rm u}},k}^{{\rm{UL,}}{n_{\rm t}}} = \mathop {\lim }\limits_{M \to \infty } {({\bf{h}}_{{n_{\rm r}},{n_{\rm u}},k}^{{\rm{DL,}}{n_{\rm f}}})^{\rm H}}{\bf{h}}_{{n_{\rm r}},{n_{\rm u}},k}^{{\rm{DL,}}{n_{\rm f}}} = {{\bf{1}}_{MN \times MN}}$, $\forall $ $n_{\rm t}=1,\cdots,N_{\rm t}$ and $n_{\rm f}=1,\cdots,N_{\rm f}$, where ${\bf{h}}_{{n_{\rm r}},{n_{\rm u}},k}^{{\rm{UL,}}{n_{\rm t}}}$ and ${\bf{h}}_{{n_{\rm r}},{n_{\rm u}},k}^{{\rm{DL,}}{n_{\rm t}}}$ are the true UL and DL channel, respectively, we have $\mathop {\lim }\limits_{{\sigma ^2} \to 0,M \to \infty } ||{\bf{c}}_{{\rm{SP}}}^{{\rm{UL}}} - {\bf{c}}_{{\rm{SP}}}^{{\rm{DL}}}|{|^2} = 0$, where ${\bf{c}}_{{\rm{SP}}}^{{\rm{DL}}} = {\mathop{\rm vec}\nolimits} ({\bf{C}}_{{\rm{SP}}}^{{\rm{DL}}})$. In this case, only a zero-order DLP is sufficient to accurately fit the Slepian coefficients. Assuming that the number of iterations of SBEE is $p=1$, and the iteration step size is $\Delta=N_{\rm f}$, the $n_{\rm t}$-th Slepian coefficient can be obtained by
\begin{equation}
 \bar C_{{\rm{SP}}}^{}({n_{\rm t}},k) = \hat c_{k,0}^{{\rm{DLP}},(0)}\varphi _0^{(0)}[{t_{{n_{\rm t}}}}] = \hat c_{k,0}^{{\rm{DLP}},(0)},
\label{eq53}
\end{equation}
and the $(N_{\rm t}+N_{\rm f})$-th Slepian coefficient can be obtained by
\begin{equation}
 \hat {\bar C}_{{\rm{SP}}}^{(1)}({N_{\rm t}} + {n_{\rm f}},k) = \hat c_{k,0}^{{\rm{SP}},(0)}\varphi _0^{(1)}[{t_{{N_{\rm t}} + {n_{\rm f}}}}] = \hat c_{k,0}^{{\rm{DLP}},(0)},
\label{eq54}
\end{equation}
where $\varphi _0^{(0)}[{t_{{n_{\rm t}}}}] = \varphi _0^{(1)}[{t_{{N_{\rm t}} + {n_{\rm f}}}}] = 1$. From (52) and (53), we have $\mathop {\lim }\limits_{{\sigma ^2} \to 0,M \to \infty } ||{\bf{c}}_{{\rm{SP}}}^{{\rm{UL}}} - {\bf{\hat c}}_{{\rm{SP}}}^{{\rm{DL}}}|| = 0$, and thus $\mathop {\lim }\limits_{{\sigma ^2} \to 0,M \to \infty } ||{\bf{c}}_{{\rm{SP}}}^{{\rm{DL}}} - {\bf{\hat c}}_{{\rm{SP}}}^{{\rm{DL}}}|| = 0$. Finally, we have 
\begin{equation}
\begin{array}{l}
\mathop {\lim }\limits_{{\sigma ^2} \to 0,M \to \infty }\mathbb{E} \{ ||{{\bf{h}}^{{\rm{DL}},{N_{\rm f}}}} - {{{\bf{\hat h}}}^{{\rm{DL}},{N_{\rm f}}}}|{|^2}\} \\
 = \mathop {\lim }\limits_{{\sigma ^2} \to 0,M \to \infty }\mathbb{E} \{ ||{{{\bf{\tilde B}}}_{{\rm{SP}}}}({\bf{c}}_{{\rm{SP}}}^{{\rm{DL}}} - {\bf{\hat c}}_{{\rm{SP}}}^{{\rm{DL}}})|{|^2}\} \\
 = \mathop {\lim }\limits_{{\sigma ^2} \to 0,M \to \infty }\mathbb{E} \{ {({\bf{c}}_{{\rm{SP}}}^{{\rm{DL}}} - {\bf{\hat c}}_{{\rm{SP}}}^{{\rm{DL}}})^{\rm H}}{\bf{\tilde B}}_{{\rm{SP}}}^{\rm T}{{{\bf{\tilde B}}}_{{\rm{SP}}}}({\bf{c}}_{{\rm{SP}}}^{{\rm{DL}}} - {\bf{\hat c}}_{{\rm{SP}}}^{{\rm{DL}}})\\
 = \mathop {\lim }\limits_{{\sigma ^2} \to 0,M \to \infty }\mathbb{E} \{ ||{\bf{c}}_{{\rm{SP}}}^{{\rm{DL}}} - {\bf{\hat c}}_{{\rm{SP}}}^{{\rm{DL}}}|{|^2}\}  = 0,
\end{array}
\label{eq55}
\end{equation}
where ${{{\bf{\tilde B}}}_{{\rm{SP}}}} = {{\bf{I}}_{{N_{\rm r}}{N_{\rm u}}K{N_{\rm f}}}} \otimes {{\bf{B}}_{{\rm{SP}}}}$, ${{\bf{h}}^{{\rm{DL}},{N_{\rm f}}}} = {\mathop{\rm vec}\nolimits} ({{\bf{H}}^{{\rm{DL}},{N_{\rm f}}}})$ and ${{{\bf{\hat h}}}^{{\rm{DL}},{N_{\rm f}}}} = {\mathop{\rm vec}\nolimits} ({{{\bf{\hat H}}}^{{\rm{DL}},{N_{\rm f}}}})$.

\end{appendices}

\vspace{-0pt}

\bibliographystyle{IEEEbib}
\bibliography{refs}

\end{document}